\newtheorem{theorem}{Theorem}
\providecommand{\citet}[2][]{\ifx&#1&\textcite{#2}\else\textcite[#1]{#2}\fi}
\renewcommand{\cite}[1]{\parencite{#1}}
\title{Two-Stage Facility Location Games with Strategic Clients and Facilities}
\author{
Simon Krogmann$^1$
\and
Pascal Lenzner$^1$\and
Louise Molitor$^1$\And
Alexander Skopalik$^2$
\affiliations
$^1$Hasso Plattner Institute, University of Potsdam\\
$^2$Mathematics of Operations Research, University of Twente\\
\emails
\{simon.krogmann, pascal.lenzner, louise.molitor\}@hpi.de,
a.skopalik@utwente.nl
}
\newtheorem{lemma}{Lemma}
\newtheorem{definition}{Definition}
\newtheorem{corollary}{Corollary}
\newcommand{\limtmpmodel}{$2$-FLG} 
\DeclareMathOperator*{\argmin}{arg\,min}
\newcommand{\ensurenewline}{%
\ifhmode
    \newline
\fi
}
\newcommand*{\trheight}{1.03923}
\newcommand*{\trlength}{1.2}
\newcommand{\s}{\mathbf{s}}
\begin{document}

\maketitle

\begin{abstract}
    We consider non-cooperative facility location games where both facilities and clients act strategically and heavily influence each other. This contrasts established game-theoretic facility location models with non-strategic clients that simply select the closest opened facility.
    In our model, every facility location has a set of attracted clients and each client has a set of shopping locations and a weight that corresponds to her spending capacity. Facility agents selfishly select a location for opening their facility to maximize the attracted total spending capacity, whereas clients strategically decide how to distribute their spending capacity among the opened facilities in their shopping range. We focus on a natural client behavior similar to classical load balancing: our selfish clients aim for a distribution that minimizes their maximum waiting times for getting serviced, where a facility's waiting time corresponds to its total attracted client weight.

    We show that subgame perfect equilibria exist and give almost tight constant bounds on the Price of Anarchy and the Price of Stability, which even hold for a broader class of games with arbitrary client behavior. Since facilities and clients influence each other, it is crucial for the facilities to anticipate the selfish clients' behavior when selecting their location. For this, we provide an efficient algorithm that also implies an efficient check for equilibrium.
    Finally, we show that computing a socially optimal facility placement is NP-hard and that this result holds for all feasible client weight distributions.
\end{abstract}

\section{Introduction}
Facility location problems are widely studied in Operations Research, Economics, Mathematics, Theoretical Computer Science, and Artificial Intelligence. In essence, in these problems facilities must be placed in some underlying space to serve a set of clients that also live in that space. Famous applications of this are the placement of hospitals in rural areas to minimize the emergency response time or the deployment of wireless Internet access points to maximize the offered bandwidth to users. These problems are purely combinatorial optimization problems and can be solved via a rich set of methods.
Much more intricate are facility location problems that involve competition, i.e., if the facilities compete for the clients. These settings can no longer be solved via combinatorial optimization and instead, methods from Game Theory are used for modeling and analyzing them.

The first model on competitive facility location was the famous \emph{Hotelling-Downs model}, first introduced by \citet{hotelling} and later refined by \citet{downs}.
Their original interpretations are selling a commodity in the main street of a town, and parties placing themselves in a political left-to-right spectrum, respectively.
They assume a one-dimensional market on which clients are uniformly distributed and where $k$ facility agents each want to place a single facility on the market. Each facility gets the clients, to which their facility is closest.
\citet{voronoigames} introduced Voronoi games on networks, that move the problem onto a graph and assume discrete clients on each node.

The models mentioned above are one-sided, i.e., only the facility agents face a strategic choice while the clients simply patronize their closest facility independently of the choices of other clients. Obviously, realistic client behavior can be more complex than this.
For example, a client might choose not to patronize any facility, if there is no facility sufficiently close to her. This setting was recently studied by \citet{feldman-hotelling}, \citet{hotelling-limited-attraction} and \citet{hotelling-line-bubble} albeit with continuous clients on a line.
In their model with limited attraction ranges, clients split their spending capacity uniformly among all facilities that are within a certain distance. In contrast to the Hotelling-Downs model, pure Nash equilibria always exist. In another related variant by \citet{fournier2020spatial}, clients that have multiple facilities in their range choose the nearest facilities.
Another natural client behavior is that they might avoid crowded facilities to reduce waiting times.
This notion was introduced to the Hotelling-Downs model by \citet{load-balancing}, also on a line.
Clients consider a linear combination of both distance and waiting time, as they want to minimize the total time spent visiting a facility.
This models clients that perform load balancing between different facilities.
\citet{Peters2018} prove the existence of subgame perfect equilibria for certain trade-offs of distance and waiting time for two, four and six facilities and they conjecture that equilibria exist for all cases with an even number of facilities for client utility functions that are heavily tilted towards minimizing waiting times. \citet{hotelling-load-balancing} investigated the existence of approximate pure subgame perfect equilibria for Kohlberg's model and their results indicate that $1.08$-approximate equilibria exist.
The most notable aspect of Kohlberg's model is that it is two-sided, i.e., both facility and client agents act strategically. This implies that the facility agents have to anticipate the client behavior, in particular the client equilibrium. For Kohlberg's model \citet{hotelling-load-balancing} show that this entails the highly non-trivial problem of solving a complex system of equations.

In this paper we present a very general two-sided competitive facility location model that is essentially a combination of the models discussed above. Our model has an underlying host graph with discrete weighted clients on each vertex. The host graph is directed, which allows to model limited attraction ranges, and we have facilities and clients that both face strategic decisions. Most notably, in contrast to Kohlberg's model and despite our model's generality, we provide an efficient algorithm for computing the facilities' loads in a client equilibrium. Hence, facility agents can efficiently anticipate the client behavior and check if a game state is in equilibrium.

\subsection{Further Related Work}
Voronoi games were introduced by \citet{voronoi1d} on a line. For the version on networks by \citet{voronoigames}, the authors show that equilibria may not exist and that existence is NP-hard to decide.
Also, they investigate the ratio between the social cost of the best and the worst equilibrium state, where the social cost is measured by the total distance of all clients to their selected facilities. With $n$ the number of clients and $k$ the number of facilities, they prove bounds of $\Omega(\sqrt{n/k})$ and $\mathcal{O}(\sqrt{kn})$.
While we are not aware of other results on general graphs, there is work for specific graph classes:
\citet{voronoi-cycle} limit their investigation to cycle graphs and characterize the existence of equilibria and bound the Price of Anarchy (PoA)~\cite{poa} and the Price of Stability (PoS)~\cite{pos} to $\frac94$ and $1$, respectively.
Additionally, there are many closely related variants with two agents:
restaurant location games \cite{restaurant-location}, a variant by~\citet{duopoly-voronoi}, and a multi round version~\cite{voronoi-multi-round}.
Moreover, there are variants played in $k$-dimensional space: \citet{voronoi-kd-voters}, \citet{voronoi1d}, \citet{voronoi-choice}.
To the best of our knowledge, there is no variant with strategic clients aiming at minimizing their maximum waiting time.

A concept related to our model are utility systems, as introduced by \citet{vetta-utility-system}.
Agents gain utility by selecting a set of acts, which they choose from a collection of subsets of a groundset.
Utility is assigned by a function that takes the selected acts of all agents as an input.
Two special types are considered: basic and valid utility systems.
For the former, it is shown that pure Nash equilibria (NE) exist.
For the latter, no NE existence is shown but the PoA is upper bounded by 2.
We show in \Cref{sec:utility-systems} that our model with load balancing clients is a valid but not a basic utility system.

Covering games~\cite{gairing-covering-games} correspond to a one-sided version of our model, i.e., where clients simply distribute their weight uniformly among all facilities in their shopping range. There, pure NE exist and the PoA is upper bounded by~$2$.
More general versions are investigated by \citet{3g-market-sharing} and \citet{alex-market-sharing} in the form of market sharing games. In these models, $k$ agents choose to serve a subset of $n$ markets. Each market then equally distributes its utility among all agents who serve it.
\citet{alex-market-sharing} show a PoA of $2-\frac1k$ for their game.

Recently
\citet{alex-network-investment} introduced a model which considers an inherent load balancing problem, however, each facility agent can create and choose multiple facilities and each client agent chooses multiple facilities.

For further related models we refer to the excellent surveys by \citet{ELT93} and \citet{RE05}.

\subsection{Model and Preliminaries}
We consider a game-theoretic model for non-cooperative facility location, called the \emph{Two-Sided Facility Location Game (\limtmpmodel{})}, where two types of agents, $k$ \emph{facilities} and $n$ \emph{clients}, strategically interact on a given vertex-weighted directed host graph $H = (V,E,w)$, with $V = \{v_1,\dots,v_n\}$, where $w:V \to \mathbb{N}$ denotes the vertex weight. Every vertex $v_i \in V$ corresponds to a client with weight $w(v_i)$, that can be understood as her spending capacity, and at the same time each vertex is a possible location for setting up a facility for any of the $k$ facility agents $\mathcal{F} = \{f_1,\dots,f_k\}$. Any client $v_i \in V$ considers visiting a facility in her \emph{shopping range} $N(v_i)$, i.e., her direct closed neighborhood $$N(v_i) = \{v_i\} \cup \{z \mid (v_i,z) \in E\}.$$ Moreover, let $$w(X) = \sum_{v_i \in X}w(v_i),$$ for any $X \subseteq V$, denote the total spending capacity of the client subset $X$.

In our setting the strategic behavior of the facility and the client agents influences each other. Facility agents select a location to attract as much client weight as possible, whereas clients strategically decide how to distribute their spending capacity among the facilities in their shopping range.
More precisely, each facility agent $f_j \in \mathcal{F}$ selects a single location vertex $s_j \in V$ for setting up her facility, i.e., the strategy space of any facility agent $f_j\in \mathcal{F}$ is $V$. Let $\mathbf{s} = (s_1,\dots,s_k)$ denote the \emph{facility placement profile}. And let $\mathcal{S} = V^k$ denote the set of all possible facility placement profiles.
We will sometimes use the notation $\mathbf{s} = (s_j,s_{-j})$, where $s_{-j}$ is the vector of strategies of all facilities agents except $f_j$.
Given $\mathbf{s}$, we define the \emph{attraction range} for a facility $f_j$ on location $s_j \in V$ as $$A_\mathbf{s}(f_j) = \{s_j\} \cup \{v_i \mid (v_i,s_j) \in E\}.$$ We extend this to sets of facilities $F \subseteq \mathcal{F}$ in the natural way, i.e., $$A_\mathbf{s}(F) = \{s_j \mid f_j \in F\} \cup \{v_i \mid (v_i,s_j) \in E, f_j\in F\}.$$ Moreover, let $$w_\mathbf{s}(\mathcal{F}) = \sum_{v_i \in A_\mathbf{s}(\mathcal{F})} w(v_i).$$

We assume that all facilities provide the same service for the same price and arbitrarily many facilities may be co-located on the same location.
Each client $v_i\in V$ strategically decides how to distribute her spending capacity $w(v_i)$ among the opened facilities in her shopping range~$N(v_i)$. For this, let $$N_{\mathbf{s}}(v_i) = \{f_j \mid s_j \in N(v_i)\}$$ denote the set of facilities in the shopping range of client $v_i$ under $\mathbf{s}$.

Let $\sigma: \mathcal{S} \times V \to \mathbb{R}_+^k$ denote the \emph{client weight distribution function}, where $\sigma(\mathbf{s},v_i)$ is the weight distribution of client $v_i$ and $\sigma(\mathbf{s},v_i)_{j}$ is the weight distributed by $v_i$ to facility~$f_j$. We say that $\sigma$ is \emph{feasible} for $\mathbf{s}$, if all clients having at least one facility within their shopping range distribute all their weight to the respective facilities and all other clients distribute nothing.
Formally, $\sigma$ is feasible for $\mathbf{s}$, if for all $v_i\in V$ we have $\sum_{f_j \in N_\mathbf{s}(v_i)} \sigma(\mathbf{s},v_i)_j = w(v_i)$, if $N_\mathbf{s}(v_i) \neq \emptyset$, and $\sigma(\mathbf{s},v_i)_j = 0$, for all $1\leq j \leq k$, if $N_\mathbf{s}(v_i) = \emptyset$.
We use the notation $\sigma = (\sigma_i,\sigma_{-i})$ and $(\sigma_i',\sigma_{-i})$ denotes the changed client weight distribution function that is identical to $\sigma$ except for client $v_i$, which plays $\sigma'(\mathbf{s},v_i)$ instead of $\sigma(\mathbf{s},v_i)$.

Any state $(\mathbf{s},\sigma)$ of the \limtmpmodel{} is determined by a facility placement profile~$\mathbf{s}$ and a feasible client weight distribution function $\sigma$.
A state $(\mathbf{s},\sigma)$ then yields a \emph{facility load} $\ell_j(\mathbf{s},\sigma)$ with $\ell_j(\mathbf{s},\sigma) = \sum_{i=1}^n \sigma(\mathbf{s},v_i)_j$ for facility agent $f_j$. Hence, $\ell_j(\mathbf{s},\sigma)$ naturally models the total congestion for the service offered by the facility of agent $f_j$ induced by~$\sigma$. A facility agent $f_j$ strategically selects a location $s_j$ to maximize her induced facility load $\ell_j(\mathbf{s},\sigma)$. We assume that the service quality of facilities, e.g., the waiting time, deteriorates with increasing congestion. Hence, for a client the facility load corresponds to the waiting time at the respective facility.

There are many ways of how clients could distribute their spending capacity.
As a proof-of-concept we consider the \emph{load balancing \limtmpmodel{}} with
\emph{load balancing clients}, i.e., a natural strategic behavior where client $v_i$ strategically selects $\sigma(\mathbf{s},v_i)$ to minimize her maximum waiting time. More precisely, client $v_i$ tries to minimize her \emph{incurred maximum facility load} over all her patronized facilities (if any). More formally, let $$P_i(\mathbf{s},\sigma) = \{j \mid \sigma(\mathbf{s},v_i)_j > 0\}$$ denote the set of facilities patronized by client $v_i$ in state $(\mathbf{s},\sigma)$. Then client $v_i$'s incurred maximum facility load in state $(\mathbf{s},\sigma)$ is defined as
$$L_i(\mathbf{s},\sigma) = \max_{j \in P_i(\mathbf{s},\sigma)} \ell_j(\mathbf{s},\sigma).$$
We say that $\sigma^*$ is a \emph{client equilibrium weight distribution}, or simply a \emph{client equilibrium}, if for all $v_i \in V$ and for all placement profiles $\mathbf{s}$ we have that $L_i(\mathbf{s},(\sigma_{i}^*,\sigma_{-i})) \leq L_i(\mathbf{s},(\sigma_{i}',\sigma_{-i}))$ for all possible weight distributions $\sigma'(\mathbf{s},v_i)$ of client $v_i$. See \Cref{fig:figure1} for an illustration of the load balancing \limtmpmodel{}.
\begin{figure}[t]
    \centering
    \includegraphics[width=8.0cm]{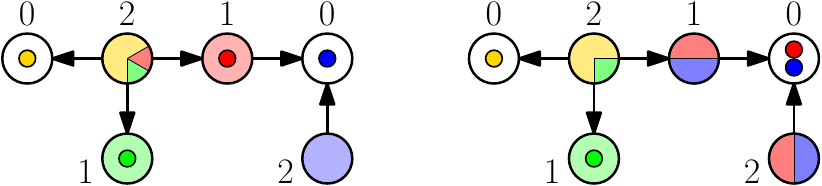}
    \caption{
        Example of the load balancing \limtmpmodel{}. The clients (vertices) split their weight (shown by numbers) among the facilities (colored dots) in their shopping range. The client distributions are shown by colored pie charts. Left: The blue facility receives a load of $2$ while all other facilities get a load of $\frac43$. The left client with weight $2$ distributes weight $\frac{4}{3}$ to the yellow facility and $\frac{1}{3}$ to both the green and the red facility.
        The state is not in SPE as the red facility can improve her load to $\frac32$ by co-locating with the blue facility. Right: A SPE for this instance, all facilities have a load of $\frac32$.}
    \label{fig:figure1}
\end{figure}

We define the \emph{stable states} of the \limtmpmodel{} as \emph{subgame perfect equilibria (SPE)}, since we inherently have a two-stage game. First, the facility agents select locations for their facilities and then, given this facility placement, the clients strategically distribute their spending capacity among the facilities in their shopping range.
A state $(\mathbf{s},\sigma)$ is in SPE, or \emph{stable}, if
\begin{itemize}
    \item[(1)] $\forall f_j \in \mathcal{F},  \forall s_j' \in V$: $\ell_j(\mathbf{s},\sigma) \geq \ell_j((s_j',s_{-j}),\sigma)$ and
    \item[(2)] $\forall \mathbf{s}' \in \mathcal{S}, \forall v_i \in V: L_i(\mathbf{s}',\sigma) \leq L_i(\mathbf{s}',(\sigma_{i}',\sigma_{-i}))$ for all feasible client weight distribution functions $\sigma'$.
\end{itemize}

We say that client $v_i$ is \emph{covered by $\mathbf{s}$}, if $N_\mathbf{s}(v_i) \neq \emptyset$, and \emph{uncovered by $\mathbf{s}$}, otherwise. Let $$C(\mathbf{s}) = \{v_i \mid v_i \in V, N_\mathbf{s}(v_i) \neq \emptyset\}$$ denote the set of covered clients under facility placement~$\mathbf{s}$.
We will compare states of the \limtmpmodel{} by measuring their \emph{social welfare} that is defined as the \emph{weighted participation rate} $$W(\mathbf{s}) = w(C(\mathbf{s})) = \sum_{v_i \in C(\mathbf{s})}w(v_i),$$ i.e., the total spending capacity of all covered clients.
For a host graph $H$ and a number of facility agents $k$, let $\text{OPT}(H,k)$ denote the facility placement profile that maximizes the weighted participation rate $W(\text{OPT}(H,k))$ among all facility placement profiles with $k$ facilities on host graph~$H$.

We measure the inefficiency due to the selfishness of the agents via the Price of Anarchy (PoA) and the Price of Stability (PoS). Let $\text{bestSPE}(H,k)$ (resp. $\text{worstSPE}(H,k)$) denote the SPE with the highest (resp. lowest) social welfare among all SPEs for a given host graph $H$ and a facility number $k$. Moreover, let $\mathcal{H}$ be the set of all possible host graphs $H$.
Then the PoA is defined as $$\text{PoA} :=\max_{H\in \mathcal{H},k} 
\frac{
W(\text{OPT}(H,k))}{W(\text{worstSPE}(H,k))},$$

whereas the PoS is defined as
$$\text{PoS}:=\max_{H\in \mathcal{H},k} \frac{W(\text{OPT}(H,k))}{W(\text{bestSPE}(H,k))}.$$

We study dynamic properties of the \limtmpmodel{}.
Let an \emph{improving move} by some (facility or client) agent be a strategy change that improves the agent's utility.
A game has the \emph{finite improvement property (FIP)} if all sequences of improving moves are finite.
The FIP is equivalent to the existence of an \emph{ordinal potential function}~\cite{MS96}.

\subsection{Our Contribution}
We introduce and analyze the \limtmpmodel{}, a general model for competitive facility location games, where facility agents and also client agents act strategically.
We focus on the load balancing \limtmpmodel{}, where clients selfishly try to minimize their maximum waiting times that not only depend on the placement of the facilities but also on the behavior of all other client agents.
We show that client equilibria always exist and that all client equilibria are equivalent from the facility agents' point-of-view.
Additionally, we provide an efficient algorithm for computing the facility loads in a client equilibrium that enables facility agents to efficiently anticipate the clients' behavior. This is crucial in a two-stage game-theoretic setting. Moreover, since there are only $n$ possible locations for facilities, we can efficiently check if a given state of the load balancing \limtmpmodel{} is in SPE. Using a potential function argument, we can show that a SPE always exists.

Finally, we consider the \limtmpmodel{} with an arbitrary feasible client weight distribution function. For this broad class of games, we prove that the PoA is upper bounded by $2$ and we give an almost tight lower bound of $2 - \frac{1}{k}$ on the PoA and~PoS. This implies an almost tight PoA lower bound for the load balancing \limtmpmodel{}. Furthermore, we show that computing a social optimum state for the \limtmpmodel{} with an arbitrary feasible client weight distribution function $\sigma$ is NP-hard for all feasible $\sigma$, hence, also for the load balancing \limtmpmodel{}.

\section{Load Balancing Clients}
In this section we analyze the load balancing \limtmpmodel{} in which we consider not only strategic facilities that try to get patronized by as many clients as possible but we also have selfish clients that strategically distribute their spending capacity to minimize their maximum waiting time for getting serviced.
We start with a crucial statement that enables the facility agents to anticipate the clients' behavior.
\begin{theorem}
\label{theo:existence}
For a facility placement profile $\mathbf{s}$, a client equilibrium $\sigma$ exists and every client equilibrium induces the same facility loads $(\ell_1(\mathbf{s},\sigma),\ldots,\ell_k(\mathbf{s},\sigma))$.
\end{theorem}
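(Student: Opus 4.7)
My plan is to introduce the convex potential function $\Phi(\sigma) = \sum_{j=1}^{k} \ell_j(\mathbf{s},\sigma)^2$ and reduce the theorem to two claims: (a) $\sigma$ is a client equilibrium if and only if $\sigma$ minimizes $\Phi$ over all feasible weight distributions, and (b) the induced load vector $(\ell_1(\mathbf{s},\sigma), \dots, \ell_k(\mathbf{s},\sigma))$ is the same at every such minimizer. Existence then follows immediately: the set of feasible $\sigma$ is a compact convex polytope (a product of one simplex per covered client), $\Phi$ is continuous, and any minimizer is by (a) a client equilibrium.

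For the characterization (a), I would use an infinitesimal pairwise-shift computation: if client $v_i$ moves $\epsilon > 0$ weight from a patronized facility $f_j$ (i.e., $\sigma(\mathbf{s}, v_i)_j > 0$) to any $f_{j'} \in N_\mathbf{s}(v_i)$, then $\Phi$ changes by exactly $2\epsilon(\ell_{j'} - \ell_j) + 2\epsilon^2$. At a minimizer of $\Phi$ this forces $\ell_{j'} \geq \ell_j$ for every such triple; equivalently, every client patronizes only facilities of minimum load within her shopping range. Upgrading this local first-order optimality of $\Phi$ to the full equilibrium condition is the step I expect to be the main obstacle, since the swap analysis only controls pairwise deviations. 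I would handle it by a counting argument: let $m_i = \min_{f_j \in N_\mathbf{s}(v_i)} \ell_j$ and assume $v_i$ has a deviation $\sigma'_i$ with new incurred max load strictly below $m_i$. Then on every facility she contributes to under $\sigma'_i$, the post-deviation load $\ell_j - \sigma(\mathbf{s},v_i)_j + \sigma'_i(j)$ is strictly below $m_i \leq \ell_j$, forcing $\sigma'_i(j) < \sigma(\mathbf{s},v_i)_j$. Summing over those $j$ then gives $w(v_i) = \sum_j \sigma'_i(j) < \sum_j \sigma(\mathbf{s},v_i)_j \leq w(v_i)$, a contradiction. The converse direction of (a) is immediate, since the equilibrium condition matches the KKT conditions for the convex program $\min \Phi$.

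For uniqueness of loads (b), I would invoke strict convexity of the quadratic map $\ell \mapsto \sum_{j} \ell_j^2$. If two equilibria $\sigma^1, \sigma^2$ produced distinct load vectors $\ell^1 \neq \ell^2$, then the midpoint $\tfrac{1}{2}(\sigma^1 + \sigma^2)$ would be feasible with load $\tfrac{1}{2}(\ell^1 + \ell^2)$ and $\Phi$-value strictly smaller than $\tfrac{1}{2}(\Phi(\sigma^1) + \Phi(\sigma^2))$, contradicting the joint minimality established in (a). Hence every client equilibrium yields the same facility loads.
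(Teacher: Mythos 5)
Your proposal is correct and follows essentially the same route as the paper: the paper minimizes the same convex objective $\sum_{j}\ell_j(\mathbf{s},\sigma)^2$ over the feasible polytope, identifies its optima with client equilibria via the KKT conditions, and derives uniqueness of the load vector from strict convexity of the quadratic in the loads exactly as you do. Your explicit counting argument for upgrading the pairwise first-order condition to a full best response is a welcome elaboration of a step the paper dispatches by simply asserting that the KKT conditions coincide with the equilibrium conditions.
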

\begin{proof}
    We consider the following optimization problem (EQ):
    \begin{align*}
        &\min_{\sigma} \sum_{i=1}^k \ell_i(\mathbf{s},\sigma)^2\\
        \text{subject to}\\
        \sigma(\mathbf{s},v_i)_j &\ge 0 &\text{ for all } f_j \in N_\mathbf{s}(v_i)\\
        \sigma(\mathbf{s},v_i)_j &= 0   &\text{ for all } f_j \notin N_\mathbf{s}(v_i)\\
        \sum_{f_j \in N_\mathbf{s}(v_i)} \sigma(\mathbf{s},v_i)_j &= w(v_i)  &\text{ if }  N_\mathbf{s}(v_i) \ne \emptyset\\
    \end{align*}
    It is easy to see that an optimal solution
    $\sigma$ of EQ is a client equilibrium. For the sake of contradiction, assume that there exists a client $v_i$ and two facility agents $f_p$ and $f_q$ with $\ell_q(\mathbf{s},\sigma) > \ell_p(\mathbf{s},\sigma)$ and $\sigma(\mathbf{s},v_i)_q > 0$.
    However, this contradicts the optimality of $\sigma$ as the KKT conditions~\cite{KKT} demand that $\ell_q(\mathbf{s},\sigma) \le \ell_p(\mathbf{s},\sigma)$ for all $f_p,f_q \in N_\mathbf{s}(v_i)$ with $\sigma(\mathbf{s},v_i)_q > 0$. Moreover, the KKT conditions are precisely the conditions of a client equilibrium, hence every equilibrium is an optimal solution of EQ.

    Observe that the objective of EQ is convex in the facilities' loads $\ell_1(\mathbf{s},\sigma),\ldots,\ell_k(\mathbf{s},\sigma)$ and the set of feasible solutions is compact and convex.
    Suppose there are two global optima $\sigma$ and $\sigma'$ of EQ. By convexity of the objective function, we must have $\ell_j(\mathbf{s},\sigma) = \ell_j(\mathbf{s},\sigma')$ for all facility agents $f_j$ as otherwise a convex combination of $\sigma$ and $\sigma'$ would yield a feasible solution for EQ with smaller objective function value.
\end{proof}

\noindent Two facility agents sharing a client have equal load if the shared client puts weight on both of them:

\begin{lemma}
    \label{lemma:shared-client-equal-load-balancing}
    In the load balancing \limtmpmodel{}, for a facility placement $\mathbf{s}$ let $\sigma$ be a client equilibrium. If there are two facility agents $f_p$ and $f_q$ and a client $v_i$ with $p, q \in P_i(\s, \sigma)$, then $\ell_p(\s,\sigma)=\ell_q(\s,\sigma)$.
\end{lemma}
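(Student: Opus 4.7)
The plan is to derive a contradiction by exhibiting a beneficial deviation for client $v_i$. Suppose, for the sake of contradiction, that $\ell_p(\s,\sigma) \neq \ell_q(\s,\sigma)$ and, without loss of generality, that $\ell_p(\s,\sigma) < \ell_q(\s,\sigma)$. Set $M = L_i(\s,\sigma) = \max_{j \in P_i(\s,\sigma)} \ell_j(\s,\sigma)$ and let $T = \{j \in P_i(\s,\sigma) : \ell_j(\s,\sigma) = M\}$ be the set of facilities that attain this maximum while being patronized by $v_i$. Since $q \in P_i(\s,\sigma)$ we have $\ell_q(\s,\sigma) \leq M$, and combined with $\ell_p < \ell_q$ this yields $p \notin T$.

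I would construct a new distribution $\sigma'_i$ for client $v_i$ by moving a small amount $\epsilon > 0$ of $v_i$'s weight from \emph{every} facility in $T$ to $f_p$, leaving all other allocations of $v_i$ unchanged. Since $\sigma(\s,v_i)_j > 0$ for each $j \in T \subseteq P_i(\s,\sigma)$, this is a feasible redistribution of $v_i$'s weight for all sufficiently small $\epsilon > 0$. Under the new profile $(\sigma'_i,\sigma_{-i})$, the load of every $f_j$ with $j \in T$ drops to $M - \epsilon$, the load of $f_p$ rises to $\ell_p(\s,\sigma) + |T|\epsilon$, and every other facility load is unchanged. Choosing $\epsilon < (M - \ell_p(\s,\sigma))/(|T|+1)$ guarantees $\ell_p(\s,\sigma) + |T|\epsilon < M - \epsilon$, so each facility that $v_i$ patronizes after the deviation has load strictly below $M$: the facilities of $T$ now have load $M - \epsilon$, facility $f_p$ has load below $M - \epsilon$, and the remaining facilities in $P_i(\s,\sigma) \setminus T$ already had load strictly below $M$ by definition of $T$ and are untouched. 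Consequently $L_i(\s,(\sigma'_i,\sigma_{-i})) < M = L_i(\s,\sigma)$, contradicting that $\sigma$ is a client equilibrium.

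The subtle point—and what I expect to be the main obstacle—is the possibility that several facilities in $P_i(\s,\sigma)$ tie for the maximum load $M$; shifting weight only out of $f_q$ into $f_p$ could leave another facility in $T$ still at load $M$ and therefore fail to reduce $L_i$. Draining all of $T$ simultaneously in a single deviation of $v_i$ sidesteps this cleanly. I note that this lemma can alternatively be read off directly from the KKT characterization used in the proof of Theorem~\ref{theo:existence}, where $\sigma(\s,v_i)_q > 0$ forces $\ell_q \leq \ell_p$ for every $f_p \in N_\s(v_i)$; applying this in both directions when $p,q \in P_i(\s,\sigma)$ immediately gives $\ell_p = \ell_q$.
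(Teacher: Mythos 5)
Your proof is correct and takes essentially the same approach as the paper: the paper also derives a contradiction by having $v_i$ shift a small amount of weight off the maximum-load facility \emph{and all patronized facilities tied with it} onto the lower-load one. Your explicit $\epsilon$-bookkeeping just makes precise what the paper states in a parenthetical, and the KKT remark you add matches the characterization already used in the proof of Theorem~\ref{theo:existence}.
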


\begin{proof}
    Let $v_i$ be a client and $p$ be the agent with the highest load in $P_i(\s, \sigma)$.
    Assume that there is an agent $q \in P_i(\s, \sigma)$ with $\ell_p(\s,\sigma)>\ell_q(\s,\sigma)$.
    In this case, the client $v_i$ decreases her weight on $f_p$ (and all facility agents in $P_i(\s, \sigma)$ with the same load) and increases her weight on~$f_q$, decreasing her total costs.
    This contradicts $\sigma$ being a client equilibrium.
\end{proof}

\noindent Next, we define a \emph{shared client set}, which represents a set of facility agents who share weight of the same clients.

\begin{definition}
    \label{def:shared-client-set}
    For a facility placement profile $\mathbf{s}$, let $f_p$ be an agent, $\sigma$ be a client equilibrium.
    We define a shared client set of facility agents $S_\sigma(f_p)$, such that
    \begin{itemize}
        \item[(1)] $f_p \in S_{\sigma}(f_p)$ and
        \item[(2)] for two facility agents $f_q, f_r$: If $f_q\in S_\sigma(f_p)$ and there is a client $v_i$ with $q,r \in P_i(\s, \sigma)$, then $f_r\in S_{\sigma}(f_p)$.
    \end{itemize}
\end{definition}

\noindent We prove two properties of such a shared client set:
First, all facility agents in a shared client set have the same load, and second, a client's weight is either completely inside or completely outside a shared client set in a client equilibrium.

\begin{lemma}
    \label{lemma:shared-set-equal-load}
    For a facility placement $\mathbf{s}$
    in a client equilibrium $\sigma$, for every $f_q, f_r \in S_{\sigma}(f_p)$
    we have
    $\ell_q(\s,\sigma) = \ell_r(\s,\sigma)$.
\end{lemma}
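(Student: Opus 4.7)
My plan is to view $S_\sigma(f_p)$ as the connected component of $f_p$ in the undirected ``sharing graph'' whose vertices are the facility agents in $\mathcal{F}$ and whose edges connect any two facilities that both receive positive weight from a common client under $\sigma$. Definition~\ref{def:shared-client-set} characterizes $S_\sigma(f_p)$ as exactly the closure of $\{f_p\}$ under this sharing relation, so membership in $S_\sigma(f_p)$ is the same as reachability from $f_p$ in this graph. This reframing makes the lemma a statement about propagating a single equation along paths.

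The core step would then be: for every $f_r \in S_\sigma(f_p)$, there exists a finite sequence of facility agents $f_{q_0}, f_{q_1}, \dots, f_{q_m}$ with $f_{q_0} = f_p$ and $f_{q_m} = f_r$ such that, for each $0 \le t < m$, some client $v_{i_t}$ satisfies $q_t, q_{t+1} \in P_{i_t}(\s,\sigma)$. I would extract such a chain by structural induction on how $f_r$ was added to $S_\sigma(f_p)$ via Definition~\ref{def:shared-client-set}: the base case $f_r = f_p$ yields the empty chain, and the inductive step appends the witnessing client and facility to the chain already given for the preceding member.

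Applying Lemma~\ref{lemma:shared-client-equal-load-balancing} to each consecutive pair $f_{q_t}, f_{q_{t+1}}$ gives $\ell_{q_t}(\s,\sigma) = \ell_{q_{t+1}}(\s,\sigma)$, and transitivity of equality along the chain yields $\ell_p(\s,\sigma) = \ell_r(\s,\sigma)$. Doing this for both $f_q$ and $f_r$ relative to $f_p$ finally produces $\ell_q(\s,\sigma) = \ell_p(\s,\sigma) = \ell_r(\s,\sigma)$, which is the desired conclusion.

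I do not expect any real obstacle here: once Lemma~\ref{lemma:shared-client-equal-load-balancing} is in hand, the result is a pure transitive-closure argument. The only mildly delicate bookkeeping is formally setting up the induction on Definition~\ref{def:shared-client-set} so that the chain of facilities is guaranteed to be finite, which is immediate because each inductive application of clause~(2) adds exactly one facility and $\mathcal{F}$ is finite.
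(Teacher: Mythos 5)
Your proposal is correct and follows essentially the same route as the paper's proof: extract a chain of facilities connected pairwise by shared clients, apply Lemma~\ref{lemma:shared-client-equal-load-balancing} to each consecutive pair, and conclude by transitivity. The only cosmetic difference is that you route both chains through $f_p$ while the paper takes a single chain directly from $f_q$ to $f_r$.
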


\begin{proof}
    As $f_q$ and $f_r$ are both members of $S_{\sigma}(f_p)$ there exists a sequence of facility agents $F = (f_q, f_{i_1}, f_{i_2}, \dots, f_r)$, in which two adjacent facility agents share a client.
    By \Cref{lemma:shared-client-equal-load-balancing}, each pair of neighbors in $F$ has identical loads.
    Thus, $\ell_q(\s,\sigma) = \ell_r(\s,\sigma)$.
\end{proof}

\noindent The next lemma follows from \Cref{def:shared-client-set}:

\begin{lemma}
    \label{lemma:shared-set-whole-client}
    For a facility placement $\mathbf{s}$,
    in a client equilibrium $\sigma$ for every client $v_i$ and facility agent $f_p$ with $p \in P_i(\s, \sigma)$, we have that for every facility agent $f_r \notin S_{\sigma}(f_p)$ it holds that $r \notin P_i(\s, \sigma)$.
\end{lemma}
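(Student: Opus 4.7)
The plan is to recognize that the statement is essentially the contrapositive of clause (2) of Definition~\ref{def:shared-client-set}, applied with the distinguished agent $f_p$ itself in the role of the ``seed'' of the shared client set. So no potential-function or optimization machinery is needed here; the argument is a one-line direct invocation of the closure property.

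More concretely, I would proceed by contradiction. Suppose there is a facility agent $f_r \notin S_{\sigma}(f_p)$ with $r \in P_i(\s,\sigma)$ while also $p \in P_i(\s,\sigma)$ by assumption. By clause~(1) of Definition~\ref{def:shared-client-set} we have $f_p \in S_{\sigma}(f_p)$. Now apply clause~(2) with $f_q := f_p$ and the client $v_i$: since $f_p \in S_{\sigma}(f_p)$ and $p, r \in P_i(\s,\sigma)$, clause~(2) forces $f_r \in S_{\sigma}(f_p)$, contradicting the choice of $f_r$. This yields the lemma.

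There is no real obstacle to overcome: the only subtlety worth flagging is making explicit that $f_p$ plays the role of both the ``anchor'' of the set (via clause~(1)) and the ``member'' $f_q$ whose shared client $v_i$ is used to pull $f_r$ into $S_{\sigma}(f_p)$ (via clause~(2)). Once that is spelled out, the proof is immediate and does not even need the equilibrium property of $\sigma$ beyond what is already baked into the definition of $P_i(\s,\sigma)$ and $S_{\sigma}(f_p)$.
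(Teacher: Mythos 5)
Your proof is correct and matches the paper's intent exactly: the paper gives no written proof, simply noting that the lemma ``follows from \Cref{def:shared-client-set}'', and your argument (instantiating clause~(2) with $f_q := f_p$, using clause~(1) to get $f_p \in S_{\sigma}(f_p)$) is precisely the one-line closure argument the authors have in mind. No gaps.
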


\noindent Additionally, we show that each facility agent's load can only take a limited number of values.

\begin{lemma}
    \label{lemma:limited-values-load-balancing}
    For a facility placement profile $\mathbf{s}$,
    in a client equilibrium $\sigma$ a facility agent's load can only take a value of the form $\frac{x}{y}$ for $x \leq w_\mathbf{s}(\mathcal{F})$ and $y \leq k$ with $x, y \in \mathbb{N}$.
\end{lemma}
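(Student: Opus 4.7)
The plan is to express each facility's load as a ratio determined by the shared client set containing it, and then read off the required bounds from the two shared-client-set lemmas already proved. Fix a facility placement $\mathbf{s}$, a client equilibrium $\sigma$, and an arbitrary facility $f_p$. Let $S = S_\sigma(f_p)$ be its shared client set, and let $C_S = \{v_i \in V \mid P_i(\mathbf{s},\sigma) \cap S \neq \emptyset\}$ denote the set of clients that send positive weight to at least one facility in $S$.

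The first step is to compute the total load $\sum_{f_j \in S} \ell_j(\mathbf{s},\sigma)$ in two ways. By \Cref{lemma:shared-set-whole-client}, any client $v_i \in C_S$ has $P_i(\mathbf{s},\sigma) \subseteq S$, so she distributes her entire weight $w(v_i)$ inside $S$; conversely, no client outside $C_S$ contributes to the load of any facility in $S$. Consequently $\sum_{f_j \in S} \ell_j(\mathbf{s},\sigma) = w(C_S)$. Combining this with \Cref{lemma:shared-set-equal-load}, which guarantees that every facility in $S$ carries the same load, yields $\ell_p(\mathbf{s},\sigma) = w(C_S)/|S|$.

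It then remains to verify the arithmetic bounds. Set $x := w(C_S)$ and $y := |S|$. Since vertex weights lie in $\mathbb{N}$ and $w(C_S)$ is a finite sum of such weights, $x \in \mathbb{N}$; and since every client in $C_S$ is covered, $C_S \subseteq A_\mathbf{s}(\mathcal{F})$, giving $x \leq w_\mathbf{s}(\mathcal{F})$. The bound $y = |S| \leq k$ is immediate because $S \subseteq \mathcal{F}$. Facility agents with zero load (those outside any client's patronized set) trivially fit the form via $0 = 0/1$.

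I expect the main obstacle to be purely conceptual bookkeeping rather than calculation: one must carefully invoke \Cref{lemma:shared-set-whole-client} to argue that clients in $C_S$ commit their full integer weight to $S$ (so that the numerator is actually a sum of client weights, not an arbitrary real), and simultaneously invoke \Cref{lemma:shared-set-equal-load} to split that sum evenly across the at most $k$ facilities of $S$. Once those two ingredients are in place, the claimed form $x/y$ with the stated bounds falls out directly.
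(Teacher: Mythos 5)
Your proposal is correct and follows essentially the same route as the paper's own proof: both fix a facility, pass to its shared client set, use \Cref{lemma:shared-set-whole-client} to conclude that the total load on that set is an integer sum of client weights bounded by $w_\mathbf{s}(\mathcal{F})$, and use \Cref{lemma:shared-set-equal-load} to divide that integer evenly among the at most $k$ members. Your treatment is if anything slightly more careful than the paper's (explicitly naming $C_S$ and handling zero-load facilities), so no changes are needed.
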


\begin{proof}
    If a client is shared between two facilities, these two facilities must, by \Cref{lemma:shared-client-equal-load-balancing}, have the same load.
    We consider an arbitrary facility agent $f_j$ and her shared client set $S_{\sigma}(f_j)$.
    All facility agents in $S_{\sigma}(f_j)$ have the same load by \Cref{lemma:shared-set-equal-load} and all clients which have weight on a facility agent on $S_{\sigma}(f_j)$ have their complete weight inside $S_{\sigma}(f_j)$ by \Cref{lemma:shared-set-whole-client}.
    Therefore, the sum of loads of the facility agents $S_{\sigma}(f_j)$ must be an integer $i \leq w_\mathbf{s}(\mathcal{F})$.
    Thus, the load of $f_j$ is $\frac{i}{|S_{\sigma}(f_j)|}$.
    Since $i \leq w_\mathbf{s}(\mathcal{F})$ (sum of client weights) and $|S_\sigma(f_j)| \leq k$ (number of facility agents) with $i, |S_\sigma(f_j)| \in \mathbb{N}$, the lemma is true.
\end{proof}

\begin{definition}
    For a facility placement profile $\mathbf{s}$, a set of facility agents $\emptyset \subset M \subseteq \mathcal{F}$ is a \emph{minimum neighborhood set} (MNS) if for all
    $$\emptyset \subset T \subseteq \mathcal{F}\text:~\frac{w(A_\mathbf{s}(M))}{|M|} \leq \frac{w(A_\mathbf{s}(T))}{|T|}\text.$$
    We define the \emph{minimum neighborhood ratio} (MNR) as
    $$
    \rho_\mathbf{s} := \frac{w(A_\mathbf{s}(M))}{|M|}\text,
    $$
    with $M$ being a \emph{MNS}.
\end{definition}

\noindent We show that a \emph{MNS} receives the entire weight of all clients within its range and this weight is equally distributed.

\begin{lemma}
    \label{lemma:minimum-neighborhood-set}
    For a facility placement profile $\mathbf{s}$,
    in a client equilibrium $\sigma$, each facility $f_j \in M$ of a minimum neighborhood set $M$ has a load of exactly $\ell_j(\s,\sigma)=\rho_\mathbf{s}$.
\end{lemma}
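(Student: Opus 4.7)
The plan is to squeeze each load $\ell_j(\s,\sigma)$ for $f_j \in M$ between two matching bounds equal to $\rho_\mathbf{s}$: a global lower bound saying \emph{every} facility has load at least $\rho_\mathbf{s}$ in any client equilibrium, and a counting upper bound saying the total load on $M$ is at most $|M|\,\rho_\mathbf{s}$. The counting upper bound is immediate since only clients in $A_\mathbf{s}(M)$ can contribute weight to facilities in $M$, and each contributes at most her own weight, so
\[
\sum_{f_j \in M}\ell_j(\s,\sigma) \;\leq\; w(A_\mathbf{s}(M)) \;=\; |M|\,\rho_\mathbf{s}
\]
by the definition of $\rho_\mathbf{s}$.

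For the harder direction, my plan is a threshold argument applied to the set $T := \{f_k \in \mathcal{F} : \ell_k(\s,\sigma) = L_{\min}\}$, where $L_{\min} := \min_{f_k \in \mathcal{F}} \ell_k(\s,\sigma)$ is the smallest facility load in the equilibrium. The key observation is that any client $v_i \in A_\mathbf{s}(T)$ has some minimum-load facility in her shopping range, so $\min_{f_k \in N_\mathbf{s}(v_i)} \ell_k(\s,\sigma) = L_{\min}$; by the KKT/equilibrium characterization used in the proof of Theorem~\ref{theo:existence} (and in the spirit of Lemma~\ref{lemma:shared-client-equal-load-balancing}), she then places weight only on facilities attaining this minimum, and all such facilities lie inside $T$. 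Conversely, no client outside $A_\mathbf{s}(T)$ can put weight into $T$, giving the tight identity $|T|\,L_{\min} = w(A_\mathbf{s}(T))$. The definition of MNR then yields $L_{\min} = w(A_\mathbf{s}(T))/|T| \geq \rho_\mathbf{s}$, and this lower bound applies uniformly to every facility.

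Combining the two bounds closes the argument: $\ell_j(\s,\sigma) \geq \rho_\mathbf{s}$ for every $f_j \in M$, yet their sum is at most $|M|\,\rho_\mathbf{s}$, which forces each individual load in $M$ to equal $\rho_\mathbf{s}$ exactly. I expect the uniform lower bound to be the main obstacle: one has to recognize that the set of globally minimum-load facilities is automatically ``MNR-saturated'' in the sense that clients in its attraction range must route their entire weight into it. That saturation is where the client equilibrium condition does all the real work; once it is in hand, the MNR definition supplies the remaining inequality in a single line.
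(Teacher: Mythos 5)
Your proposal is correct and takes essentially the same route as the paper's proof: both obtain the lower bound $\ell_j(\s,\sigma)\ge\rho_\mathbf{s}$ for \emph{all} facilities by looking at the set $T$ of minimum-load facilities and using the equilibrium condition to show all weight of $A_\mathbf{s}(T)$ must flow into $T$ (you phrase this as a direct saturation identity $|T|\,L_{\min}=w(A_\mathbf{s}(T))$, the paper as a contradiction with a ``leaking'' client), and both finish with the identical counting upper bound $\sum_{f_j\in M}\ell_j(\s,\sigma)\le w(A_\mathbf{s}(M))=|M|\,\rho_\mathbf{s}$.
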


\begin{proof}
    Let $M$ be a MNS and $\sigma$ be an arbitrary client equilibrium.
    Let $$T = \argmin_{f_j \in \mathcal{F}} {\left(\ell_j(\s,\sigma)\right)}$$ be the set of facility agents who share the lowest load in $\sigma$.
    Let $\ell_T$ be the load of each facility agents in $T$, hence for each $f_j \in T$ we have $\ell_j(\s,\sigma) = \ell_T$.
    Assume for the sake of contradiction that $\ell_T < \frac{w(A_\mathbf{s}(M))}{|M|}$.
    Since $M$ is a MNS,
    we have $$\frac{w(A_\mathbf{s}(T))}{|T|} \geq \frac{w(A_\mathbf{s}(M))}{|M|}\text.$$
    Thus,

    \begin{align*}
        \sum_{f_j\in T}{\ell_j(\s,\sigma)} = |T| \cdot \ell_T & < |T| \frac{w(A_\mathbf{s}(M))}{|M|}  \\
        & \leq |T| \frac{w(A_\mathbf{s}(T))}{|T|} = w(A_\mathbf{s}(T))\text.
    \end{align*}

    Hence, there is at least one client agent $v_i$ in a range of at least one facility agent $f_a \in T$, who does not put her complete weight on the facility agents in $T$.
    Therefore, there is a facility agent $f_b \notin T$, with $\ell_b(\s,\sigma) > \ell_T$ and $\sigma(\mathbf{s},v_i)_b > 0$.
    However, since $\ell_b(\s,\sigma) > \ell_T$, $v_i$ would prefer to move weight away from $f_b$ to $f_a$.
    Thus, we arrive at a contradiction and in all client equilibria we have for each facility agent $f_j \in \mathcal{F}$ that $\ell_j(\s,\sigma) \geq \frac{|A_\mathbf{s}(M)|}{|M|}$.

    The facility agents in $M$ only have access to the clients in $A_\mathbf{s}(M)$.
    Thus, if for any facility agent $f_c \in M$ the utility is $\ell_c(\s,\sigma) > \frac{w(A_\mathbf{s}(M))}{|M|}$, there must be another facility agent $f_d \in M$ where $\ell_d(\s,\sigma) < \frac{w(A_\mathbf{s}(M))}{|M|}$ holds.
    Since this is not possible, we get for each facility agent $f_j \in M$ that $\ell_j(\s,\sigma) = \frac{w(A_\mathbf{s}(M))}{|M|}$.
\end{proof}

\subsection{Facility Loads in Polynomial Time}
\label{sec:poly-alg-load-balancing}

We present a polynomial-time combinatorial algorithm to compute the loads of the facility agents in a client equilibrium for a given facility placement profile $\mathbf{s}$.
As each facility agent only has $n$ possible strategies, this implies that the best responses of facility agents are computable in polynomial time.

\Cref{alg:load-balancing-utilities} iteratively determines a MNS $M$, assigns to each facility in $M$ the MNR and removes the facilities and all client agents in their range from the instance.
See \Cref{fig:load-balancing-algorithm} for an example of a run of the algorithm.

\begin{algorithm}
    \caption{computeUtilities($H=(V,E,w), \mathcal{F}$, $\mathbf{s}$)}
    \label{alg:load-balancing-utilities}
    \lIf{$\mathcal{F} = \emptyset$}{\KwRet}
    $M \gets$ computeMNS($H, \mathcal{F}, \mathbf{s}$)\;
    \For{$f_j \in M$}{
        $\ell_j(\s,\sigma) \gets \frac{w(A_\mathbf{s}(M))}{|M|}$\;
    }
    $H' \gets (V,E,w')$ \textbf{with} $ w'(v_i)=0$ \textbf{if} $v_i \in A_\mathbf{s}(M)$ \textbf{else} $w'(v_i)=w(v_i)$\;
    computeUtilities($H', \mathcal{F} \setminus M, \mathbf{s}$)\;
\end{algorithm}

\begin{figure}[h]
    \centering
    \begin{tikzpicture}[scale=0.8, transform shape]
        \begin{scope}
            [
            every node/.style = {circle, thick, draw, inner sep = 0 pt, minimum size = 15 pt}
            ]
            \node (1) [label=above left:$f_1$] at (0, 0) {};
            \node (2) at (1*\trheight, 0) {};
            \node (3) at (2*\trheight, 0) {};
            \node (4) at (3*\trheight, 0) {};
            \node (5) at (4*\trheight, 0) {};
            \node (6) [label=above right:$f_4$] at (5*\trheight, 0) {};
            \node (7) [label=right:$f_2$] at (2.5*\trheight, 0.95*\trheight) {};
            \node (8) [label=right:$f_3$] at (2.5*\trheight, -0.95*\trheight) {};
            \node (9) at (5*\trheight, -0.95*\trheight) {};
            \node (10) at (5*\trheight, 0.95*\trheight) {};
        \end{scope}
        \begin{scope}
            [
            every node/.style = {circle,fill,inner sep=2pt}
            ]
            \node (1a) at (1) {};
            \node (6a) at (6) {};
            \node (7a) at (7) {};
            \node (8a) at (8) {};
        \end{scope}
        \coordinate(box1a) at (1.4*\trheight,1.3*\trheight);
        \coordinate(box1b) at (-0.7*\trheight,-1.3*\trheight);
        \coordinate(box2a) at (4.4*\trheight,1.3*\trheight);
        \coordinate(box2b) at (1.6*\trheight,-1.3*\trheight);
        \coordinate(box3a) at (5.9*\trheight,1.3*\trheight);
        \coordinate(box3b) at (4.6*\trheight,-1.3*\trheight);
        \begin{scope}
            [
            every node/.style = {inner sep = 0 pt, minimum size = 5 pt}
            ]
            \node [label=below left:$S_1$] at (box1a) {};
            \node [label=below left:$S_2$] at (box2a) {};
            \node [label=below left:$S_3$] at (box3a) {};
        \end{scope}
        \begin{scope}
            [
            every path/.style = {thick, {Latex[length=2mm]}-}
            ]
            \draw
            (1) edge (2)
            (6) edge (5) edge (9) edge (10)
            (7) edge (2) edge (3) edge (4) edge (5)
            (8) edge (2) edge (3) edge (4) edge (5)
            ;
        \end{scope}
        \begin{scope}
            [
            every path/.style = {thick, dotted}
            ]
            \draw (box1a) rectangle (box1b);
            \draw (box2a) rectangle (box2b);
            \draw (box3a) rectangle (box3b);
        \end{scope}
    \end{tikzpicture}
    \caption[Example of algorithm computeUtilities]{
        An instance of the load balancing \limtmpmodel{} with a facility placement profile marked by dots and $10$ clients with weight $1$ each. \Cref{alg:load-balancing-utilities}
        successively finds and removes the minimum neighborhood sets $S_1=\{f_1\}$, $S_2=\{f_2, f_3\}$ and $S_3=\{f_4\}$.}
    \label{fig:load-balancing-algorithm}
\end{figure}
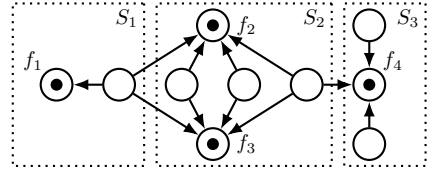

\noindent The key ingredient of \Cref{alg:load-balancing-utilities} is the computation of a MNS in
\Cref{alg:minimum-neighborhood-set}.
\begin{algorithm}
    \caption{computeMNS($H=(V,E,w), \mathcal{F}, \mathbf{s}$)}
    \label{alg:minimum-neighborhood-set}
    construct directed graph $G=(V', E_{st} \cup E_\text{Range})$\;
    $V' \gets \{s, t\} \cup V \cup \mathcal{F}$\;
    $E_{st} \gets \{(s, v_i, w(v_i)) \mid v_i \in V\} \cup \{(f_j, t, 0) \mid f_j \in \mathcal{F}\}$\;
    $E_\text{Range} \gets \{(v_i, f_j, w(v_i)) \mid v_i \in V, f_j \in A_\mathbf{s}(v_i)\}$\;
    possibleUtilities $\gets$ sorted$(\{x/y \mid x, y \in \mathcal{N}, 0 \leq x \leq w_\mathbf{s}(\mathcal{F}), 1 \leq y \leq k\})$\;
    \For{binary search over $i \in$ possibleUtilities}{
        $\forall{f_j \in \mathcal{F}}:$ capacity$((f_j, t)) \gets i$\;
        $h \gets$ maximum $s$-$t$-flow in $G$\;
        \leIf{$\text{value}(h) = i\cdot k$}{$i$ too small}{$i$ too large}
    }
    $T \gets \emptyset, \rho \gets$ largest $i \in$ possibleUtilities not too large\;
    \For{$f_j \in \mathcal{F}$}
    {
        $\forall{f_p \in \mathcal{F}}:$ capacity$((f_p, t)) \gets \rho$\;
        capacity$((f_j, t)) \gets \infty$\;
        start with flow from binary search for $i=\rho$\;
        \If{$\nexists$ augmenting path in $G$}{
            $T \gets T \cup \{f_j\}$\;
        }
    }
    \KwRet{T}\;
\end{algorithm}
Here, we first identify the MNR by a reduction to a maximum flow problem.
To this end, we construct a graph, where from a common source vertex $s$ demand flows through the clients to the facility agents in their respective ranges and then to a common sink $t$. See \Cref{fig:smallest-neighborhood-flow-graph} for an example of such a reduction.
By using binary search, we find the highest capacity value of the edges from the facility agents to the sink such that the flow can fully utilize all these edges.
This capacity value is the value of the MNR~$\rho_\mathbf{s}$. Note that by \Cref{lemma:limited-values-load-balancing} the MNR can only attain a limited number of values.
After determining the MNR, we identify the facility agents belonging to a MNS $M$ by individually increasing the capacity of the edge to the sink $t$ for each facility agent. Only if this does not increase the maximum flow, a facility agent belongs to $M$.
By reusing the flow for $\rho_\mathbf{s}$ a search for an augmenting path with the increased capacity is sufficient to determine if the flow is increased.

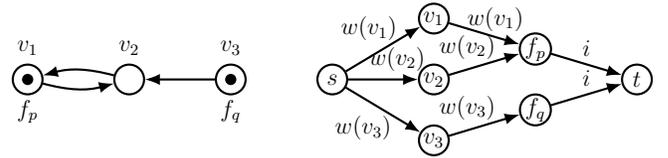
\begin{figure}[h]
    \Large
    \centering
    \begin{tikzpicture}[scale= 0.75, transform shape]
        \begin{scope}
            [
            every node/.style = {circle, thick, draw, inner sep = 0 pt, minimum size = 15 pt}
            ]
            \node (v1) [label=above:$v_1$] [label=below:$f_p$] at (-4.5*\trlength, 0) {};
            \node (v2) [label=above:$v_2$] at (-3*\trlength, 0) {};
            \node (v3) [label=above:$v_3$] [label=below:$f_q$] at (-1.5*\trlength, 0) {};

            \node (s) at (0, 0) {$s$};
            \node (v1a) at (1.5*\trlength, 0.9*\trlength) {$v_1$};
            \node (v2a) at (1.5*\trlength, 0) {$v_2$};
            \node (v3a) at (1.5*\trlength, -0.9*\trlength) {$v_3$};
            \node (p) at (3*\trlength, 0.45*\trlength) {$f_p$};
            \node (q) at (3*\trlength, -0.45*\trlength) {$f_q$};
            \node (t) at (4.5*\trlength, 0) {$t$};
        \end{scope}
        \begin{scope}
            [
            every node/.style = {circle,fill,inner sep=2pt}
            ]
            \node (1a) at (v1) {};
            \node (3a) at (v3) {};
        \end{scope}
        \begin{scope}
            [
            every node/.style = {above, midway, fill=none, draw=none},
            every path/.style = {thick, -{Latex[length=2mm]}}
            ]
            \draw
            (s) edge node[xshift=-.25cm] {$w(v_1)$} (v1a)
            (s) edge node[xshift=.25cm] {$w(v_2)$} (v2a)
            (s) edge node[below, xshift=-.35cm, yshift=.05cm] {$w(v_3)$} (v3a)
            (v1a) edge node[xshift=.2cm, yshift=-.1cm] {$w(v_1)$} (p)
            (v2a) edge node[xshift=-.3cm, yshift=-.05cm] {$w(v_2)$} (p)
            (v3a) edge node[xshift=-.3cm, yshift=-.05cm] {$w(v_3)$} (q)
            (p) edge node {$i$} (t)
            (q) edge node {$i$} (t)
            (v2) edge[bend right=15] (v1)
            (v1) edge[bend right=15] (v2)
            (v3) edge (v2)
            ;
        \end{scope}
    \end{tikzpicture}
    \caption[Example of construction for computing nodesWithSmallestNeighborhoodPerFacility]{Left: An instance of the load balancing \limtmpmodel{} with the graph~$H$ and the facility placement profile $\mathbf{s}$ marked by dots. Right: The maximum flow instance constructed by \Cref{alg:minimum-neighborhood-set}.
    }
    \label{fig:smallest-neighborhood-flow-graph}
\end{figure}

\noindent We first prove the correctness of \Cref{alg:minimum-neighborhood-set}:

\begin{theorem}
    \label{lemma:smallest-neighborhood-algorithm-correctness}
    For an instance of the load balancing \limtmpmodel{}, a facility placement profile $\mathbf{s}$, \Cref{alg:minimum-neighborhood-set} computes a MNS.
\end{theorem}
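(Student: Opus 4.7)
The plan is to analyze the flow network built in Algorithm~\ref{alg:minimum-neighborhood-set} via the max-flow min-cut theorem and show that the two phases of the algorithm recover the MNR~$\rho_\s$ and then an inclusion-maximal MNS.

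First I would characterize the minimum $s$--$t$ cuts. For a candidate uniform capacity $i$ on every $(f_j,t)$ edge and any $T\subseteq\mathcal{F}$, the cut placing $T\cup A_\s(T)$ on the sink side has capacity $w(A_\s(T))+(k-|T|)\cdot i$. A per-client argument shows that some cut of this form is minimum: once $T$ is fixed as the set of facilities on the sink side, it is optimal to put a client $v_i$ on the sink side exactly when some facility of $T$ lies in its shopping range (placing it on the source side costs $|T\cap N_\s(v_i)|\cdot w(v_i)\geq w(v_i)$, while the sink side costs only $w(v_i)$ via the single $s\to v_i$ edge). By max-flow min-cut, $\mathrm{value}(h)=i\cdot k$ iff $w(A_\s(T))\geq |T|\cdot i$ for every nonempty $T\subseteq\mathcal{F}$, iff $i\leq\rho_\s$. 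Because $\rho_\s=w(A_\s(M))/|M|$ with $w(A_\s(M))\in\mathbb{N}$ bounded by $w_\s(\mathcal{F})$ and $|M|\leq k$, the value $\rho_\s$ lies in \emph{possibleUtilities}, so the binary search returns exactly $\rho_\s$.

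For the second phase, I would reapply the cut analysis after raising the single capacity $c(f_j,t)$ to $\infty$: any finite cut must now have $f_j$ on the sink side, so the max flow equals $\min_{T\ni f_j}[w(A_\s(T))+(k-|T|)\rho_\s]$, which equals $k\rho_\s$ iff some \emph{tight} $T\ni f_j$ exists (i.e.\ satisfies $w(A_\s(T))=|T|\cdot\rho_\s$), iff $f_j$ belongs to some MNS. Because the saturating flow for $i=\rho_\s$ is carried over as the starting flow, the max flow strictly grows iff the residual graph contains an augmenting $s$--$t$ path, so the inner loop's check correctly places $f_j$ into $T$ precisely when $f_j$ lies in some MNS. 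Consequently the returned set $T$ is the union of all MNSs.

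The main obstacle is to prove that this union is itself a MNS, i.e.\ that tightness is closed under union. If $M_1,M_2$ are tight, then using $A_\s(M_1\cup M_2)=A_\s(M_1)\cup A_\s(M_2)$, inclusion--exclusion on vertex weights, and the containment $A_\s(M_1\cap M_2)\subseteq A_\s(M_1)\cap A_\s(M_2)$, I get
\[
w(A_\s(M_1\cup M_2))\leq w(A_\s(M_1))+w(A_\s(M_2))-w(A_\s(M_1\cap M_2)).
\]
If $M_1\cap M_2=\emptyset$ the subtracted term is $0$; otherwise the MNS ratio bound applied to $M_1\cap M_2$ yields $w(A_\s(M_1\cap M_2))\geq|M_1\cap M_2|\cdot\rho_\s$. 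In either case the right-hand side is at most $|M_1\cup M_2|\cdot\rho_\s$, which, together with the MNS lower bound $w(A_\s(M_1\cup M_2))\geq|M_1\cup M_2|\cdot\rho_\s$, forces equality. Iterating over all tight sets shows the returned $T$ is tight, completing the correctness proof.
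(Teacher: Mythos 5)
Your proof is correct, but it takes a genuinely different route from the paper's. The paper argues directly on flows: for the lower bound it observes that a saturating flow of value $k\rho$ forces $|T|\rho \le w(A_\mathbf{s}(T))$ for every $T$, and for the upper bound it invokes the game-theoretic Lemma~\ref{lemma:minimum-neighborhood-set} to manufacture a feasible flow of value $(\rho+\delta)k$ whenever no set violates the ratio bound at $\rho+\delta$, deriving a contradiction. You instead characterize the minimum $s$--$t$ cuts as $w(A_\mathbf{s}(T))+(k-|T|)\cdot i$ and read off both directions at once from max-flow min-cut; this is self-contained (no appeal to client equilibria) and arguably cleaner. The more substantive difference is in the second phase: the loop marks $f_j$ exactly when $f_j$ lies in \emph{some} tight set, so the returned set is the union of all MNSs, and one must show this union is itself an MNS. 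You prove this explicitly via the submodularity-type inequality $w(A_\mathbf{s}(M_1\cup M_2))\le w(A_\mathbf{s}(M_1))+w(A_\mathbf{s}(M_2))-w(A_\mathbf{s}(M_1\cap M_2))$ and closure of tightness under union. The paper elides this step --- its concluding claim that $\frac{w(A_\mathbf{s}(T))}{|T|}=\rho$ for \emph{every} $T\subseteq M$ is stronger than needed and not literally true (a singleton inside $M$ can have a strictly larger ratio, as in the set $S_2=\{f_2,f_3\}$ of Figure~\ref{fig:load-balancing-algorithm}); what is actually required, and what your lattice argument delivers, is that the full returned set attains the ratio $\rho_\mathbf{s}$. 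So your write-up not only matches the statement but closes a gap the paper leaves open.
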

\begin{proof}
    We show that the MNR $\rho$ computed by the algorithm is correct by proving that $\rho$ is a lower and upper bound for $\rho_\mathbf{s}$.

    We show that for each set of facility agents $T$, we get $\rho \leq \frac{w(A_\mathbf{s}(T))}{|T|}$.
    To this end, consider the maximum flow for $i=\rho$. The value of this flow must be $\text{value}(h)=k \rho$, since $\rho$ is below the threshold found by the binary search.
    As the total capacity of the edges leaving the source $s$ towards vertices $v_i \in A_\mathbf{s}(T)$ is upper bounded by $w(A_\mathbf{s}(T))$ and every vertex $f_p$ with $f_p \in T$ is only reachable via vertices $v_i \in A_\mathbf{s}(T)$, the total inflow to the vertices $f_p \in T$ is $w(A_\mathbf{s}(T))$. Furthermore, the capacity of each edge from a facility vertex to the sink vertex $t$ is exactly $\rho$, hence each of these edges carries a flow of exactly $\rho$. Thus, we get $|T| \rho \le w(A_\mathbf{s}(T))$ for every set of facility agents $T$.

    For the upper bound, we show that there is a set $T$ for which $\rho \geq \frac{w(A_\mathbf{s}(T))}{|T|}$.
    We consider the flow at $i=\rho+\delta$, the value immediately above $\rho$ in \emph{possibleUtilities}.
    We assume that for each set $T$, $\rho+\delta \leq \frac{w(A_\mathbf{s}(T))}{|T|}$.
    By \Cref{lemma:minimum-neighborhood-set}, there must be a weight distribution~$\sigma$, such that every facility agent receives $\rho+\delta$ load.
    Thus, setting the flow of every edge $(v_i,f_j)$ in $h$ to $\sigma(\mathbf{s},v_i)_j$ for each $v_i \in V, f_j \in \mathcal{F}$ results in a flow of $(\rho+\delta)k$.
    This leads to $\rho+\delta$ being below the threshold and, hence, we have a contradiction.
    Therefore, there must be a set of facility agents $T$ with $\rho+\delta > \frac{w(A_\mathbf{s}(T))}{|T|}$.
    By \Cref{lemma:limited-values-load-balancing}, there is no value between $\rho$ and $\rho+\delta$, which $\frac{w(A_\mathbf{s}(T))}{|T|}$ can attain. Thus, there must be a set~$T$ with $\rho \geq \frac{w(A_\mathbf{s}(T))}{|T|}$.

    It remains to show that the set of facility agents $M$ computed by the algorithm is indeed a MNS.
    By the feasibility of the total flow of $k\cdot \rho$ for the instance with capacity bounds of $\rho$, we have for every set of facility agents~$T$, $\frac{w(A_\mathbf{s}(T))}{|T|}\ge \rho$.
    For every $f_j \notin M$, there exists an augmenting path where the edge $(f_j,t)$ has capacity~$\infty$.
    Hence, there is a total flow strictly larger than $k\cdot \rho$ with flow of exactly $\rho$ through all $f_q \ne f_j$.
    As the flow through each $f_i$ is bounded by $w(A_\mathbf{s}(f_i))$, for every $T$ with $f_j \in T$, $\frac{w(A_\mathbf{s}(T))}{|T|} > \rho$. Therefore, $f_j$ does not belong to the MNS.

    For every $f_j \in M$, the absence of an augmenting path certifies that the flow is constrained by capacity representing the clients' spending capacities. Hence, $\frac{w(A_\mathbf{s}(T))}{|T|} = \rho$ for every $T \subseteq M$.
\end{proof}

\noindent With that, we bound the runtime of \Cref{alg:minimum-neighborhood-set}.

\begin{lemma}
    \label{lemma:smallest-neighborhood-algorithm-runtime}
    \Cref{alg:minimum-neighborhood-set} runs in $\mathcal{O}(\log(w_\mathbf{s}(\mathcal{F})k)nk(n+k))$.
\end{lemma}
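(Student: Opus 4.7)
The plan is to bound the cost contributed by each phase of Algorithm~\ref{alg:minimum-neighborhood-set} separately and then sum. The first step is to measure the size of the constructed flow network: the vertex set $V' = \{s,t\} \cup V \cup \mathcal{F}$ has $n+k+2 = \mathcal{O}(n+k)$ vertices, while $E_{st} \cup E_\text{Range}$ has at most $n + k + nk = \mathcal{O}(nk)$ edges (in the worst case every client lies in the attraction range of every facility). The set \emph{possibleUtilities} consists of all fractions $x/y$ with $0 \leq x \leq w_\mathbf{s}(\mathcal{F})$ and $1 \leq y \leq k$, so by a straightforward count (cf.\ \Cref{lemma:limited-values-load-balancing}) it has $\mathcal{O}(w_\mathbf{s}(\mathcal{F}) \cdot k)$ elements and can be sorted once in time negligible compared to the claimed bound. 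Consequently the binary search over this set performs $\mathcal{O}(\log(w_\mathbf{s}(\mathcal{F})k))$ iterations.

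Next I would bound the cost of a single maximum $s$-$t$-flow computation on $G$. Using a strongly polynomial max-flow algorithm such as Orlin's, the maximum flow in a network with $|V'|$ vertices and $|E|$ edges can be found in $\mathcal{O}(|V'| \cdot |E|) = \mathcal{O}((n+k) \cdot nk) = \mathcal{O}(nk(n+k))$ time. Combining this with the logarithmic number of iterations, the binary-search phase runs in $\mathcal{O}(\log(w_\mathbf{s}(\mathcal{F})k) \cdot nk(n+k))$ time overall, which matches the stated bound.

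Finally, I would analyze the second for-loop, which iterates over the $k$ facility agents. For each $f_j$ the algorithm reuses the flow computed for $i = \rho$ and increases only the capacity of the single edge $(f_j,t)$ to $\infty$; it then suffices to look for one augmenting path in the residual network, which can be done by a BFS/DFS in $\mathcal{O}(|V'| + |E|) = \mathcal{O}(nk)$ time. Summed over all facility agents this contributes $\mathcal{O}(nk^2)$, which is dominated by the binary-search term since $nk^2 \leq nk(n+k)$. Adding the two contributions yields the claimed runtime.

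The main obstacle I foresee is the choice of the max-flow subroutine: a naive Ford--Fulkerson argument that bounds the number of augmenting paths by the maximum flow value would be pseudo-polynomial in $k \cdot w_\mathbf{s}(\mathcal{F})$, so one must invoke a strongly polynomial algorithm such as Orlin's to obtain the $\mathcal{O}(|V'| \cdot |E|)$ factor. With that choice in place, the remaining work is a routine counting exercise.
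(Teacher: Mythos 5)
Your proposal is correct and follows essentially the same route as the paper: bound the flow network at $n+k+2$ vertices and at most $n+k+nk$ edges, charge $\mathcal{O}(\log(w_\mathbf{s}(\mathcal{F})k))$ max-flow computations via Orlin's algorithm to the binary search, and observe that the $k$ augmenting-path searches in the final loop are dominated. The only cosmetic difference is that you make the domination of the second loop and the choice of a strongly polynomial max-flow subroutine explicit, which the paper states more tersely.
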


\begin{proof}
    Since $|\text{possibleUtilities}| \leq w_\mathbf{s}(\mathcal{F})k$, the binary search needs $\log{w_\mathbf{s}(\mathcal{F})k}$ steps.
    In each iteration, the dominant part is the computation of the flow, since all other operations are executable in constant time or are linear iterations through $G$.
    Therefore, the runtime of the binary search is the runtime of $\log{w_\mathbf{s}(\mathcal{F})k}$ flow computations in $G$.
    For the loop, we need $k$ breadth-first searches to determine the existence of augmenting paths.

    The graph $G$ we create has $|V'| = n+k+2$ vertices and at most $|E'| \leq n+k+nk$ edges.
    These values are not changed throughout the algorithm.
    Thus, by using Orlin's algorithm~\cite{orlin-algo}, to compute the maximum flow in $\mathcal{O}(nk(n+k))$, which dominates the complexity of the loop and its augmenting path searches.
    Therefore, the algorithm runs in $\mathcal{O}(\log(w_\mathbf{s}(\mathcal{F})k)nk(n+k))$.
\end{proof}

\noindent We return to \Cref{alg:load-balancing-utilities} and prove correctness and runtime:

\begin{theorem}
    Given a facility placement profile $\mathbf{s}$, \Cref{alg:load-balancing-utilities} computes the agent loads for an instance of the load balancing \limtmpmodel{} in $\mathcal{O}(\log(w_\mathbf{s}(\mathcal{F})k)nk^2(n+k))$.
\end{theorem}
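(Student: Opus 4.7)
The plan is to prove both the correctness and the runtime claim by induction on $|\mathcal{F}|$, with \Cref{lemma:minimum-neighborhood-set} providing the link between one recursive level and the next. The base case is $\mathcal{F}=\emptyset$, where the algorithm returns immediately and there is nothing to compute.

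For correctness in the inductive step, Algorithm~\ref{alg:load-balancing-utilities} invokes Algorithm~\ref{alg:minimum-neighborhood-set} to obtain a set $M$; by \Cref{lemma:smallest-neighborhood-algorithm-correctness}, $M$ is an MNS, and by \Cref{lemma:minimum-neighborhood-set} each $f_j \in M$ has load exactly $\rho_\mathbf{s}=w(A_\mathbf{s}(M))/|M|$ in every client equilibrium, which matches the value assigned by the algorithm. The remaining task is to justify that after zeroing the weights of $A_\mathbf{s}(M)$, the residual instance yields the same equilibrium loads for $\mathcal{F}\setminus M$ as the original instance. Since the facilities of $M$ together receive load $|M|\rho_\mathbf{s}=w(A_\mathbf{s}(M))$ and can only draw weight from clients in $A_\mathbf{s}(M)$, every client in $A_\mathbf{s}(M)$ must send its entire weight to $M$. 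Consequently, erasing these clients' weights leaves the loads of all facilities in $\mathcal{F}\setminus M$ unchanged, and the inductive hypothesis applied to the recursive call $(H',\mathcal{F}\setminus M,\mathbf{s})$ completes the correctness argument.

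For the runtime, since an MNS is by definition nonempty, each recursive call strictly reduces $|\mathcal{F}|$, so the recursion depth is at most $k$. The dominant cost of each call is the single invocation of Algorithm~\ref{alg:minimum-neighborhood-set}, which by \Cref{lemma:smallest-neighborhood-algorithm-runtime} runs in $\mathcal{O}(\log(w_\mathbf{s}(\mathcal{F})k)nk(n+k))$; assigning the loads on $M$ and constructing $H'$ adds only linear-time bookkeeping. Summing over at most $k$ recursive levels gives the claimed bound $\mathcal{O}(\log(w_\mathbf{s}(\mathcal{F})k)nk^2(n+k))$, noting that both $w_\mathbf{s}(\mathcal{F})$ and $|\mathcal{F}|$ weakly decrease along the recursion so the per-call bound remains valid with the original parameters.

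The main obstacle I anticipate is making the decoupling argument watertight: one must rule out that, in some client equilibrium of the full instance, a client $v_i \in A_\mathbf{s}(M)$ sends positive weight to a facility outside $M$. The tight identity $\sum_{f_j\in M}\ell_j(\mathbf{s},\sigma)=w(A_\mathbf{s}(M))$, a direct corollary of \Cref{lemma:minimum-neighborhood-set}, furnishes exactly this: the left-hand side cannot exceed the weight actually routed from $A_\mathbf{s}(M)$ into $M$, and equality forbids any leakage to $\mathcal{F}\setminus M$. With that observation, the equivalence between equilibria on the full and residual instances is clean and the induction goes through without further subtleties.
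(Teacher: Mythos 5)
Your proposal is correct and follows essentially the same route as the paper: peel off a minimum neighborhood set via Algorithm~\ref{alg:minimum-neighborhood-set}, assign the MNR as the load using \Cref{lemma:minimum-neighborhood-set}, recurse on the residual instance, and charge each of the at most $k$ recursion levels with one run of Algorithm~\ref{alg:minimum-neighborhood-set}. The only difference is that you explicitly justify the decoupling step---the identity $\sum_{f_j\in M}\ell_j(\mathbf{s},\sigma)=w(A_\mathbf{s}(M))$ forbids any weight from $A_\mathbf{s}(M)$ leaking to $\mathcal{F}\setminus M$---which the paper asserts as a consequence of \Cref{lemma:minimum-neighborhood-set} without spelling out; your version is the more complete one.
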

\begin{proof}

    \emph{Correctness:} By \Cref{lemma:minimum-neighborhood-set} the utilities determined for the client agents in the MNS $M$ are correct for the given instance.
    Also by \Cref{lemma:minimum-neighborhood-set}, the client equilibria of $\mathcal{F} \setminus M$ are independent of the facility agents in $M$ and the clients in $A_\mathbf{s}(M)$.
    Therefore, we can remove $M$, set the weight of each client $v_i \in A_\mathbf{s}(M)$ to $w(v_i)=0$ and proceed recursively.

    \emph{Runtime:}
    The recursive function is called at most $k$ times because the instance size is decreased by at least one facility agent in each iteration.
    Apart from the call to \Cref{alg:minimum-neighborhood-set}, all computations can be done in constant or linear time.
    Therefore, the algorithm runs in $\mathcal{O}(\log(w_\mathbf{s}(\mathcal{F})k)nk^2(n+k))$.
\end{proof}
\noindent \Cref{alg:minimum-neighborhood-set} implicitly computes a client equilibrium.

\begin{corollary}
    \label{cor:construct-client-eq}
    A client equilibrium can be constructed by using the flow values on the edges between a client and the facility agents of the MNSs computed during the binary search in \Cref{alg:minimum-neighborhood-set} as the corresponding client weight distribution.
\end{corollary}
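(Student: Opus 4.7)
The plan is to verify three things about the weight distribution $\sigma$ built from the binary-search flows in \Cref{alg:minimum-neighborhood-set}: (i) it is feasible, (ii) it induces exactly the loads that \Cref{alg:load-balancing-utilities} assigns, and (iii) it satisfies the KKT characterization of a client equilibrium established inside the proof of \Cref{theo:existence}.

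For (i), I fix a recursion level $k$ of \Cref{alg:load-balancing-utilities}, let $M_k$ be the MNS found at that level with MNR $\rho_k$, and let $\mathcal{F}_k$ denote the facility set still present. At capacity $i=\rho_k$ the binary search produced a max flow of value $|\mathcal{F}_k|\rho_k$, so every edge $(f_j,t)$ with $f_j\in\mathcal{F}_k$ is saturated. In particular the inflow into $M_k$ is $|M_k|\rho_k$, which by definition of the MNR equals $w(A_\mathbf{s}(M_k))$, i.e.\ the total source capacity attached to clients in $A_\mathbf{s}(M_k)$. Hence every client $v_i\in A_\mathbf{s}(M_k)$ routes her entire weight to facilities in $M_k$, so setting $\sigma(\mathbf{s},v_i)_j$ equal to the flow on $(v_i,f_j)$ for $f_j\in M_k$ gives a feasible partial distribution for the clients first used at level $k$. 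Performing this assignment across all recursion levels and leaving $\sigma(\mathbf{s},v_i)=0$ for uncovered clients yields a globally feasible $\sigma$. Claim (ii) is then immediate, since by construction each $f_j\in M_k$ receives inflow exactly $\rho_k$, matching the load written by \Cref{alg:load-balancing-utilities}.

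For (iii), the proof of \Cref{theo:existence} characterizes a client equilibrium through the KKT conditions: $\sigma$ is a client equilibrium iff for every client $v_i$ and all $f_p,f_q\in N_\mathbf{s}(v_i)$ with $\sigma(\mathbf{s},v_i)_q>0$ one has $\ell_q(\mathbf{s},\sigma)\leq\ell_p(\mathbf{s},\sigma)$. If $v_i$ is first covered at recursion level $k$, then every $f_q$ with $\sigma(\mathbf{s},v_i)_q>0$ lies in $M_k$ and therefore has load $\rho_k$, whereas an arbitrary $f_p\in N_\mathbf{s}(v_i)$ sits in some $M_m$ with $m\geq k$ and has load $\rho_m$. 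Hence the whole equilibrium check reduces to the monotonicity claim $\rho_1\leq\rho_2\leq\cdots\leq\rho_r$.

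The main obstacle is precisely this monotonicity, and I plan to establish it by contradiction. Suppose $\rho_{k+1}<\rho_k$ and let $T\subseteq\mathcal{F}_{k+1}$ realise $\rho_{k+1}$ in the reduced instance, where the reduced weight $w'$ vanishes on $A_\mathbf{s}(M_k)$. Since $A_\mathbf{s}(M_k\cup T)=A_\mathbf{s}(M_k)\cup A_\mathbf{s}(T)$ and $w'(A_\mathbf{s}(T))=w(A_\mathbf{s}(T)\setminus A_\mathbf{s}(M_k))$, a short calculation yields
\[
w(A_\mathbf{s}(M_k\cup T)) \;=\; w(A_\mathbf{s}(M_k))+w'(A_\mathbf{s}(T)) \;<\; |M_k|\rho_k+|T|\rho_k,
\]
so $w(A_\mathbf{s}(M_k\cup T))/|M_k\cup T|<\rho_k$, contradicting that $M_k$ is an MNS in the step-$k$ instance. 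This monotonicity completes the equilibrium verification and proves the corollary.
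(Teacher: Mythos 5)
Your proof is correct and follows essentially the same route as the paper's: construct $\sigma$ from the binary-search flows, observe that each client first covered at level $k$ sends all of her weight into the level-$k$ MNS (since the saturated edges into $t$ force the inflow to that MNS to equal the full spending capacity of its attraction range), and conclude that she cannot improve because every other facility in her range belongs to the same or a later MNS with equal or larger MNR. The one place you go beyond the paper is that you actually prove the monotonicity $\rho_1 \le \rho_2 \le \cdots$ of the MNRs across iterations via the averaging argument on $M_k \cup T$, whereas the paper merely asserts that the MNR is nondecreasing; your argument there is sound, provided the weight function in that step is read as the level-$k$ reduced weights so that the contradiction is with $M_k$ being an MNS of the step-$k$ instance.
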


\begin{proof}
    Let $\s$ be a facility placement profile and for each facility $f_j$ let $h_j$ be the maximum $s$-$t$-flow found by the binary search during the run of \Cref{alg:minimum-neighborhood-set}, which finds $f_j$ to be part of a MNS.
    We construct a client weight distribution $\sigma$ in the following way:
    For each pair $v_i, f_j$, we set $\sigma(\s, v_i)_j = h_j(v_i, f_j)$, i.e., equal to the flow between $v_i$ and $f_j$ in $h_j$.

    We now show that $\sigma$ is indeed a client equilibrium:
    Let~$v_i$ be an arbitrary client.
    The algorithm removes her from the instance (i.e., sets her weight to 0) in the first round of \Cref{alg:load-balancing-utilities}, where she has any facility $f_p$ of the MNS $M$ found in that round in her shopping range.
    Thus, all facilities $f_j$ with $\sigma(\s, v_i)_j > 0$ are part of $M$.
    By the limit on the outgoing capacity of these facilities in the binary search in \Cref{alg:minimum-neighborhood-set}, all facilities in $M$ have equal load in $\sigma$.
    Since the MNR is nondecreasing throughout the run of the algorithm, all facilities which are part of an MNS found in a later iteration, have a equal or higher load in $\sigma$ than the facilities in $M$.
    Therefore, client $v_i$ cannot improve by moving her weight.
\end{proof}

\subsection{Existence of Subgame Perfect Equilibria}
\label{sec:load-eq}
We show that the load balancing \limtmpmodel{} always possesses SPE using a lexicographical potential function.
For that, we show that when a facility agent $f_p$ changes her strategy, no other facility agent $f_q$'s load decreases below $f_p$'s new load.

\begin{lemma}
    \label{lemma:no-lower-load-balancing}
    Let $\mathbf{s}$ be a facility placement profile and $f_p$ a facility agent with an improving move $s'_p$ such that $\ell_p((s'_p,s_{-p}),\sigma)>\ell_p(\s,\sigma)$, where $\sigma$ is a client equilibrium.
    For every facility agent $f_q$ with $\ell_q((s'_p,s_{-p}),\sigma) <\ell_q(\s,\sigma)$, we have that $$\ell_q((s'_p,s_{-p}),\sigma) \ge \ell_p((s'_p,s_{-p}),\sigma).$$
\end{lemma}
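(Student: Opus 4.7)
My plan is to argue by contradiction. Write $\s'=(s_p',s_{-p})$, $\ell_r=\ell_r(\s,\sigma)$ and $\ell_r'=\ell_r(\s',\sigma')$, and suppose $\ell_q'<\ell_p'$ while $\ell_q'<\ell_q$. First I will encode the client-by-client shift from $\sigma$ to $\sigma'$ as a flow $\phi$ on the facility set: for each client $v$ I view $\sigma(\s,v)$ and $\sigma'(\s',v)$ as two distributions over facilities (extended by a dummy sink $\perp$ to absorb possible changes in covered status) and fix an arbitrary bipartite transport between them; summing over $v$ yields a non-negative flow whose net value at each $f_r$ equals $\ell_r-\ell_r'$. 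Thus load-losing facilities are sources of $\phi$ and load-gaining facilities are sinks. I then cancel directed cycles of $\phi$, leaving a DAG flow with the same net values, and decompose it into simple source-to-sink paths. Since $\ell_q'<\ell_q$ makes $f_q$ a source, some such path originates at $f_q$; call it $f_q=f_{r_1},f_{r_2},\ldots,f_{r_n}$ with $f_{r_n}$ a sink.

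Next I will establish two edge-wise monotonicity inequalities for every edge $(f_{r_i},f_{r_{i+1}})$ of this path. Because the edge carries positive flow, some client $v$ witnesses it, i.e., $\sigma(\s,v)_{r_i}>\sigma'(\s',v)_{r_i}$ and $\sigma'(\s',v)_{r_{i+1}}>\sigma(\s,v)_{r_{i+1}}$. (a) If $f_{r_i}\neq f_p$, then because all non-$p$ facilities occupy the same vertices in $\s$ and $\s'$, we have $f_{r_i}\in N_{\s'}(v)$, and the client-equilibrium property that $v$'s new-state weight sits only on minimum-load facilities of $N_{\s'}(v)$ yields $\ell_{r_{i+1}}'\leq \ell_{r_i}'$. (b) If $f_{r_{i+1}}\neq f_p$, the symmetric old-state argument (using that $v$ had positive weight on $f_{r_i}$ in $\sigma$) gives $\ell_{r_{i+1}}\geq \ell_{r_i}$. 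These are the same observations driving \Cref{lemma:shared-client-equal-load-balancing} and \Cref{lemma:shared-set-equal-load}.

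Using (a) alone I then show that the path cannot touch $f_p$: if some $f_{r_k}=f_p$, then simplicity of the path forces $f_{r_{k-1}}\neq f_p$, so iterating (a) from $f_{r_1}=f_q$ to $f_{r_k}=f_p$ gives $\ell_p'\leq \ell_q'$, contradicting our hypothesis $\ell_q'<\ell_p'$. With $f_{r_i}\neq f_p$ for all $i$, both (a) and (b) apply on every edge, chaining to $\ell_{r_n}'\leq \ell_q'$ and $\ell_{r_n}\geq \ell_q$. Since $f_{r_n}$ is a sink, $\ell_{r_n}'>\ell_{r_n}$, and combining $\ell_q'\geq \ell_{r_n}'>\ell_{r_n}\geq \ell_q$ contradicts $\ell_q'<\ell_q$.

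The main technical obstacle will be to justify the flow construction cleanly, because clients whose covered status changes between $\s$ and $\s'$ force the dummy sink $\perp$ into the flow to preserve conservation. Fortunately, any client holding positive weight on $f_q$ under $\sigma$ also has $f_q\in N_{\s'}(v)$ (since $f_q\neq f_p$) and therefore remains covered under $\s'$, so $\perp$-edges cannot appear on a path originating at $f_q$ and the subsequent case analysis is unaffected.
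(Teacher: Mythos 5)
Your argument is correct, but it takes a genuinely different route from the paper's. The paper gives a short extremal argument: it considers the set $Q_{\min}$ of load-losing facilities whose new load is minimal, observes that (when $f_p\notin Q_{\min}$) these facilities still see exactly the same clients, so some client must have shifted weight from a facility $f_r\in Q_{\min}$ to some $f_s\notin Q_{\min}$; the old-state equilibrium inequality $\ell_r(\s,\sigma)\le\ell_s(\s,\sigma)$ and the new-state inequality $\ell_s(\s',\sigma')\le\ell_r(\s',\sigma')$ then force $f_s=f_p$ (otherwise $f_s$ would itself belong to $Q_{\min}$), which yields $\ell_p(\s',\sigma')\le\ell_r(\s',\sigma')=\min_{q\in Q}\ell_q(\s',\sigma')$ in one step. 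You instead make the ``where did the lost weight go'' reasoning global and explicit via a transport/flow decomposition, and you chain exactly the same two local equilibrium inequalities (your (a) and (b) are the new-state and old-state versions of the condition underlying \Cref{lemma:shared-client-equal-load-balancing}) along a source-to-sink path starting at $f_q$. What the paper's choice of $Q_{\min}$ buys is brevity: minimality substitutes for your entire path-chaining and sink argument. What your version buys is robustness and a slightly stronger insight: your second case shows that if the displaced weight never reaches $f_p$, then $f_q$ could not have lost load at all, which is a clean conservation statement. The price is bookkeeping: you need cycle cancellation, the survival of edge witnesses after cancellation, and the dummy sink $\perp$. On that last point, your closing remark only rules out a $\perp$-edge as the \emph{first} edge of the path; you should state explicitly that \emph{every} edge incident to $\perp$ must be incident to $f_p$ (a client can change covered status only because $f_p$ moved, since no other facility changes position), so $\perp$ cannot occur on any $f_p$-avoiding path. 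That follows from the same observation you already use, so it is a presentational gap, not a mathematical one.
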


\begin{proof}
    Let $Q$ be the set of facility agents $f_q$ with $\ell_q((s'_p,s_{-p}),\sigma) <\ell_q(\s,\sigma)$.
    Let $$Q_\text{min} = \argmin_{q \in Q}\{\ell_q((s'_p,s_{-p}),\sigma)\}.$$
    Now, we distinguish two cases for $f_p$:

    \emph{Case 1: } $f_p \in Q_\text{min}$.
    The statement is trivially true.

    \emph{Case 2: } $f_p \notin Q_\text{min}$.
    All facility agents in $Q_\text{min}$ have the same clients in their ranges as before.
    Thus, there must be a client $v_i$, who has decreased her weight on a facility agent $f_r \in Q_\text{min}$ and increased her weight on a facility agent $f_s \notin Q_\text{min}$.
    Hence, we have $\ell_s((s'_p,s_{-p}),\sigma) \leq \ell_r(\s,\sigma)$ as otherwise, the client $v_i$ would not put weight on $f_s$.
    We assume $f_p \not= f_s$.
    As $\sigma$ is a client equilibrium, we have that
    $\ell_r(\s,\sigma) \leq \ell_s(\s,\sigma)$.
    This implies $$\ell_s((s'_p,s_{-p}),\sigma) <\ell_s(\s,\sigma)$$ which contradicts $f_s \notin Q_\text{min}$.
    Therefore, $f_p = f_s$ and $$\ell_r((s'_p,s_{-p}),\sigma) \geq \ell_p((s'_p,s_{-p}),\sigma),$$ which means that for each facility agent $f_q \in Q$, it holds that $$\ell_q((s'_p,s_{-p}),\sigma) \geq \ell_p((s'_p,s_{-p}),\sigma).$$
\end{proof}

\noindent With this lemma, we prove the FIP and, hence, existence of a SPE by a lexicographic potential function argument.

\begin{theorem}
    The load balancing \limtmpmodel{} has the FIP.

\end{theorem}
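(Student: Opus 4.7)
The plan is to construct a lexicographic potential on the sorted vector of facility loads. For each facility placement profile $\mathbf{s}$, let $\Phi(\mathbf{s})$ be the $k$-dimensional vector of facility loads in any client equilibrium for $\mathbf{s}$, sorted in non-decreasing order; this is well-defined by \Cref{theo:existence} since every client equilibrium for $\mathbf{s}$ induces the same loads. I will argue that $\Phi$ strictly increases in lexicographic order under any facility improving move (with clients re-equilibrating between facility moves), and that $\Phi$ takes only finitely many values, which together yield finiteness of facility-level improvement sequences. Client improving moves within a fixed facility placement terminate by a separate standard argument on the convex objective $\sum_j \ell_j(\mathbf{s},\sigma)^2$ from the proof of \Cref{theo:existence}, which strictly decreases along such moves.

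To see the lex increase, fix an improving move by $f_p$ from $s_p$ to $s_p'$, with client equilibria $\sigma$ before and $\sigma'$ after, and put $a = \ell_p(\mathbf{s},\sigma) < \ell_p((s_p',s_{-p}),\sigma') = a'$. By the contrapositive of \Cref{lemma:no-lower-load-balancing}, every facility whose new load is strictly less than $a'$ must have its load weakly increased by the move. Define the pre-move and post-move low-load sets $B = \{f_q : \ell_q(\mathbf{s},\sigma) < a'\}$ and $B' = \{f_q : \ell_q((s_p',s_{-p}),\sigma') < a'\}$. The observation gives $B' \subseteq B$, while $f_p \in B \setminus B'$ forces $|B'| < |B|$. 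In the sorted vector after the move, coordinates $1,\ldots,|B'|$ are the sorted loads of $B'$ and coordinates $|B'|+1,\ldots,k$ are at least $a'$; in the sorted vector before the move the analogous split occurs at the strictly larger index $|B|$. Hence at coordinate $|B'|+1 \le |B|$, the old entry is below $a'$ while the new entry is at least $a'$, which is a strict increase. For coordinates $j \le |B'|$, a pointwise monotonicity argument shows that the $j$-th new entry is at least the $j$-th old entry. Together these yield $\Phi((s_p',s_{-p})) >_{\text{lex}} \Phi(\mathbf{s})$.

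Finite termination then follows from \Cref{lemma:limited-values-load-balancing}: each facility load is a rational of the form $x/y$ with $x \in \{0,1,\ldots,w(V)\}$ and $y \in \{1,\ldots,k\}$, so $\Phi$ ranges over a finite set, and any lex-increasing sequence in a finite set is finite.

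The main obstacle I expect is the pointwise domination claim at coordinates $j \le |B'|$. It requires chaining two elementary observations that are easy to confuse: (i) the $j$-th element of a sorted subset of a multiset is at least the $j$-th element of the full sorted multiset, and (ii) if two sequences are paired so that the $q$-th element of one is at most the $q$-th element of the other, then independently sorting each in non-decreasing order preserves the pointwise inequality. Threading these together with the inclusion $B' \subseteq B$ and the per-facility inequality $\ell_q(\mathbf{s},\sigma) \le \ell_q((s_p',s_{-p}),\sigma')$ for $q \in B'$ delivers the required lex comparison; I do not expect any other step to pose real difficulty.
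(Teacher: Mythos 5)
Your proof is correct and follows essentially the same route as the paper: the lexicographically sorted load vector as an ordinal potential, \Cref{lemma:no-lower-load-balancing} to control which loads may fall below $f_p$'s new load, and \Cref{lemma:limited-values-load-balancing} for finiteness, with your index counting via $B$ and $B'$ being a somewhat more careful rendering of the paper's coordinate argument. One caveat: your side remark that every client improving move strictly decreases $\sum_{j} \ell_j(\mathbf{s},\sigma)^2$ is not true in general (a client can lower her own maximum incurred load while increasing the sum of squares), but this is tangential, since the paper's proof likewise considers only facility moves with clients assumed to re-equilibrate.
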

\begin{proof}
    Let $\Phi(\s) \in \mathbb{R}^k$ be the vector that lists the loads $
    \{\ell_1(\s,\sigma),$ $ \ell_2(\s,\sigma), \dots, \ell_k(\s,\sigma)\}$ in an increasing order.

    Let $\mathbf{s}$ be a facility placement profile and $f_p$ a facility agent with an improving move $s'_p$ such that $\ell_p((s'_p,s_{-p}),\sigma)>\ell_p(\s,\sigma)$, where $\sigma$ is a client equilibrium.
    We show that $\Phi(s'_p, s_{-i}) <_\text{lex} \Phi(\s)$.
    Let $\Phi(\s)$ be of the form
    $
    \Phi(\s)=(\phi_1,\ldots,\phi_{\alpha}, \ell_p(\s,\sigma), \phi_{\alpha+1},\ldots,\phi_{\beta},$ $\phi_{\beta+1},\ldots,\phi_{k-1})\text,
    $
    for some $\alpha\le \beta \le k-1$, such that for every $1\le j \le \beta$, $$\phi_j < \ell_p((s'_p,s_{-p}),\sigma)$$ and for every $j \ge \beta+1$, $$\phi_j \ge \ell_p((s'_p,s_{-p}),\sigma).$$

    By \Cref{lemma:no-lower-load-balancing}, we have for all facility agents $f_q$ with a load $\ell_q(\s,\sigma) \in \{\phi_1,\ldots,\phi_\beta\}$ that their loads do not decrease.
    and for agents $f_q$ with $\ell_q(\s,\sigma) \in \{\phi_{\beta+1},\ldots,\phi_k\}$ we have $\ell_q((s'_p,s_{-p}),\sigma) \ge \ell_p((s'_p,s_{-p}),\sigma)$.
    With the improvement of $f_p$, $
    \Phi(s'_p,s_{-p}) >_{\text{lex}} \Phi(\s)
    $ holds.
    By \Cref{lemma:limited-values-load-balancing}, there is a finite set of values that the loads can attain, thus, $\Phi$ is an ordinal potential function and the game has the FIP.
\end{proof}

\section{Comparison with Utility Systems}
\label{sec:utility-systems}

A \emph{utility system} (US)~\cite{vetta-utility-system} is a game, in which agents gain utility by selecting a set of actions, which they choose from a collection of subsets of a groundset available to them. Utility is assigned to the agents by a function of the set of selected actions of all agents.

\begin{definition}[Utility Systems (US) ~\cite{vetta-utility-system}]
    A utility systems consists of a set of $k$ agents, a groundset $V_p$ for each agent $p$, a strategy set of feasible action sets $\mathcal{A}_p \subseteq 2^{V_p}$, a social welfare function $\gamma : 2^{V^*} \to \mathbb{R}$ and a utility function $\alpha_p : 2^{V^*} \to \mathbb{R}$ for each agent~$p$, where
    $V^*=\cup_{p \in P}{V_p}$.

    For a strategy vector $(a_1, \dots, a_k)$, let $A=a_1 \cup \dots \cup a_k$ and
    $A \oplus a_p'$ denotes the set of actions obtained if agent $p$ changes her action set from $a_p$ to $a_p'$.
    A game is a utility system if
    $\alpha_p(A) \geq \gamma(A \oplus \emptyset)\text.$
    The utility system is \emph{basic} if $\alpha_p(A) = \gamma(A \oplus \emptyset)$ and is \emph{valid} if
    $\sum_{p \in P}{\alpha_p(A)} \leq \gamma(a)\text.$
\end{definition}

\noindent We show that the load balancing \limtmpmodel{} is not a basic but a valid US and we can apply the corresponding bounds for the PoA but not the existence of stable states.
\begin{lemma}
    The load balancing \limtmpmodel{} is a utility system.
\end{lemma}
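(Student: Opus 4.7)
The plan is to cast the load balancing \limtmpmodel{} in Vetta's framework. I would take the $k$ facility agents as the players and use labeled groundsets $V_p = \{(p,v) : v \in V\}$ with singleton action sets $\mathcal{A}_p = \{\{(p,v)\} : v \in V\}$; the labeling ensures that co-located facilities remain distinguishable in $V^* = \bigsqcup_{p} V_p$, since Vetta's $A$ is a set rather than a multiset. A strategy profile then encodes a placement $\mathbf{s}$, and I would define the social welfare by $\gamma(A) := W(\mathbf{s}) = w(C(\mathbf{s}))$ and the player utilities by $\alpha_p(A) := \ell_p(\mathbf{s},\sigma)$ for any client equilibrium $\sigma$; this is well-defined by \Cref{theo:existence}.

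With this setup, the US inequality to verify reduces to $\ell_p(\mathbf{s},\sigma) \geq w(C(\mathbf{s})) - w(C(\mathbf{s}_{-p}))$, i.e., the load of $f_p$ is at least her marginal coverage contribution. The key observation is that $C(\mathbf{s}) \setminus C(\mathbf{s}_{-p})$ consists exactly of the clients $v_i$ with $N_\mathbf{s}(v_i) = \{f_p\}$, i.e., those whose shopping range contains $f_p$ and no other facility (a co-located $f_q$ would keep $v_i$ covered after removing $f_p$, so such clients do not contribute to the difference). For every such $v_i$, feasibility of $\sigma$ forces $\sigma(\mathbf{s},v_i)_p = w(v_i)$, since $f_p$ is her only patronizable facility. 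Summing these contributions and using $\ell_p(\mathbf{s},\sigma) = \sum_i \sigma(\mathbf{s},v_i)_p$ yields precisely the required lower bound.

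I do not anticipate a serious obstacle. The main subtlety is the labeled-groundset construction that accommodates co-located facilities; once this bookkeeping is in place, the substantive argument uses nothing beyond feasibility of client weight distributions and the well-definedness of $\alpha_p$ granted by \Cref{theo:existence}. The heavy lifting (showing that $\ell_p$ does not depend on the choice of client equilibrium) has already been done, so the remaining verification is immediate.
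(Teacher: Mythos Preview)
Your argument is correct and the substantive step---clients with $N_\mathbf{s}(v_i)=\{f_p\}$ must, by feasibility of $\sigma$, put their entire weight on $f_p$, so $\ell_p(\mathbf{s},\sigma)\ge w(C(\mathbf{s}))-w(C(\mathbf{s}_{-p}))$---is exactly the paper's. The one difference worth flagging is the encoding of the action sets. The paper does not use singleton locations; instead it lets the feasible action sets be the labeled \emph{attraction ranges}, $\mathcal{A}_p=\{\,\{v^p\}\cup\{w^p:(v,w)\in E\}\;:\;v\in V\,\}$, and then defines $\gamma(X)=\sum_{v:\,\exists p,\;v^p\in X} w(v)$ for \emph{every} $X\subseteq V^*$. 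Your $\gamma$ is only specified on sets that happen to arise from (partial) placement profiles, which is enough for the US inequality in this lemma but leaves $\gamma$ undefined on general subsets of $V^*$. The paper's encoding buys exactly this: the subsequent lemma that $\gamma$ is nondecreasing and submodular is stated and proved for arbitrary $X\subseteq V^*$, which is what Vetta's PoA bound requires, and that goes through directly with the attraction-range encoding but would need an additional extension step with your singleton encoding.
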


\begin{proof}
    Each facility agent $f_p$ corresponds to a agent $p$ in the US with the groundset $$V_p = \{v^p \mid \text{ for each } v \in V\}$$ and
    the action set $$\mathcal{A}_p = \{\{v^p \} \cup \{w^p \in V \mid (v,w) \in E \} \mid v \in V\}.$$
    We define $$\gamma(X) = \sum_{v \in V \mid \exists p : v^p \in X}{w(v)}\text,$$ which corresponds to
    the sum of weights of covered clients and $\alpha_p(A)$ to correspond to the load of $f_p$ which can be expressed as a function of the sets of clients in range of each facility.
    To show the US condition $\alpha_p(A) \geq \gamma(A \oplus \emptyset)$, we
    let the social welfare decrease by a value of $x$ through a removal of agent~$p$ from strategy profile $a$ resulting in a new strategy profile $a_{-p}$.
    Hence, clients with a total weight of $x$ were only covered by $p$ in $a$.
    Thus, agent $p$ must receive at least $x$ utility in $a$, and the condition is fulfilled.
\end{proof}

\noindent As $\gamma$ merely depends on the covered clients, we have for every $X, Y \subset V^*$ with $X \subseteq Y$ and any $v^p \in V^*\setminus Y$, we have that $\gamma(X \cup \{v^p\}) -\gamma(X) \ge \gamma(X \cup \{v^p\}) -\gamma(X)
$. Hence, the following lemma is immediate.
\begin{lemma}
    The function $\gamma$ is non-decreasing and submodular.
\end{lemma}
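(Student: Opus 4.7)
The plan is to exploit the fact that $\gamma$ is essentially a weighted coverage function whose value depends only on the underlying client-vertices appearing in its argument, not on the player-indices. Concretely, define the projection $\pi(X) = \{v \in V : v^p \in X \text{ for some player } p\}$; then $\gamma(X) = w(\pi(X)) = \sum_{v \in \pi(X)} w(v)$. Both claims then reduce to elementary properties of the weighted-cardinality map $X' \mapsto w(X')$ on $2^V$.

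For monotonicity, I would simply observe that for any $X \subseteq Y \subseteq V^*$, every $v^p \in X$ also lies in $Y$, so $\pi(X) \subseteq \pi(Y)$; nonnegativity of $w$ then yields $\gamma(X) \le \gamma(Y)$.

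For submodularity, I would fix $X \subseteq Y$ and $v^p \in V^* \setminus Y$ with underlying vertex $v$, and write the marginal gain explicitly. Adding $v^p$ to a set $Z$ enlarges $\pi(Z)$ by $\{v\}$ if $v \notin \pi(Z)$ and by $\emptyset$ otherwise, so
$$\gamma(X \cup \{v^p\}) - \gamma(X) \;=\; w(v)\cdot \mathbf{1}[v \notin \pi(X)],$$
and analogously for $Y$. Since $\pi(X) \subseteq \pi(Y)$, the indicator $\mathbf{1}[v \notin \pi(X)]$ dominates $\mathbf{1}[v \notin \pi(Y)]$ pointwise, and combined with $w(v) \ge 0$ this gives the required inequality $\gamma(X \cup \{v^p\}) - \gamma(X) \ge \gamma(Y \cup \{v^p\}) - \gamma(Y)$.

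There is no real obstacle here: the whole argument is precisely the observation, already flagged in the sentence preceding the lemma, that $\gamma$ factors through $\pi$ and hence inherits monotonicity and submodularity from the standard weighted coverage function on $2^V$. (In passing I would also flag that the preceding display in the paper has a typo, its right-hand side should read $\gamma(Y \cup \{v^p\}) - \gamma(Y)$ rather than a duplicate of the left-hand side.)
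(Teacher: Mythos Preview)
Your argument is correct and mirrors the paper's own (very terse) justification: the paper declares the lemma immediate from the observation that $\gamma$ depends only on which client-vertices are covered, and states exactly the diminishing-marginal-returns inequality you prove via the projection~$\pi$. Your proposal is simply a careful spelling-out of that one sentence, and your remark about the typo on the right-hand side of the displayed inequality is accurate.
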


\noindent We now show that the load balancing \limtmpmodel{} is a valid but not basic US.

\begin{theorem}
    The load balancing \limtmpmodel{} is a valid, but not a basic US.
\end{theorem}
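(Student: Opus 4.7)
The plan is to handle the two parts separately. For validity, I will show that the sum of all facility loads equals exactly the total weight of covered clients, which matches $\gamma(A)$, so the inequality $\sum_p \alpha_p(A) \le \gamma(A)$ holds (in fact with equality). For the non-basic part, I will exhibit a small counterexample in which a facility agent has strictly positive load even though removing her does not decrease the social welfare.

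For validity, I would invoke the feasibility of $\sigma$: each covered client $v_i$ distributes exactly $w(v_i)$ among her facilities and each uncovered client distributes $0$. Summing over all clients,
\begin{equation*}
  \sum_{p \in \mathcal{F}} \alpha_p(A) \;=\; \sum_{p \in \mathcal{F}} \ell_p(\mathbf{s},\sigma) \;=\; \sum_{p \in \mathcal{F}} \sum_{i=1}^{n} \sigma(\mathbf{s},v_i)_p \;=\; \sum_{v_i \in C(\mathbf{s})} w(v_i) \;=\; \gamma(A),
\end{equation*}
so validity follows immediately. The key observation is that double counting is ruled out because each unit of a covered client's weight goes to exactly one facility.

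For the non-basic part, the natural counterexample is a redundant co-location. I would take a host graph consisting of a single vertex $v$ of weight~$1$ and two facility agents $f_1, f_2$ both placed on $v$. In the (unique) client equilibrium, the client splits her weight evenly, giving $\ell_1 = \ell_2 = \tfrac12$, so $\alpha_1(A) = \tfrac12$. On the other hand, removing $f_1$ still leaves $f_2$ covering $v$, hence the marginal welfare contribution $\gamma(A)-\gamma(A\oplus \emptyset) = 0 \neq \tfrac12 = \alpha_1(A)$. This shows the basic condition fails.

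There is no real obstacle here: validity is a one-line consequence of feasibility of the client weight distribution, and the non-basic property is witnessed by the simplest possible instance of redundant capacity. The only care needed is to match the paper's interpretation of $\gamma(A \oplus \emptyset)$ as the welfare change from removing the agent (as used in the preceding lemma), so that the counterexample is phrased in terms of the marginal social welfare contribution rather than the absolute value of $\gamma$ without agent $p$.
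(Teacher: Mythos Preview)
Your proposal is correct and follows essentially the same approach as the paper: validity is obtained by observing that the sum of facility loads equals the total weight of covered clients (hence $\sum_p \alpha_p(A) = \gamma(A)$), and non-basicness is shown via a small instance with redundant coverage. The only cosmetic difference is the specific counterexample---the paper uses two adjacent vertices $v_1,v_2$ with one facility on each (so $\alpha_p(A)=1$ while the marginal contribution is $0$), whereas you use a single vertex with two co-located facilities; both exploit the same idea that a facility can have positive load while its removal leaves all clients covered.
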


\begin{proof}

    The following example proves that the load balancing \limtmpmodel{} is not a basic US. Let $H = (V,E,w)$ with $V = (v_1,v_2)$, $w(v_1) = w(v_2) = 1$ and $E = \{(v_1,v_2),(v_2,v_1)\}$. Furthermore, we have two facility agents $f_p$ and $f_q$ with $\mathbf{s} = (v_1,v_2)$. Removing agent $f_p$ does not change the weighted participation rate $W(\mathbf{s})$ since all clients are still covered. However, the utility of the removed facility agent $f_p$ is equal to $1$. Hence, equality in the utility system condition does not hold and the US is not basic.

    To show that the load balancing \limtmpmodel{} is a valid US, note that each client~$v$ who is in the attraction range of at least one facility agent distributes her total weight $w(v)$ among the agents. All other clients are uncovered and hence, their distributed weight is equal to $0$.
    Thus, the total weight $\sum_{v_i \in C(\mathbf{s})}w(v_i)$ distributed by clients, which is equal to the sum of the facility agents' loads, is equal to the value of the welfare function $W(\mathbf{s})$.
\end{proof}

\noindent We are now able to apply the PoA bound of \cite{vetta-utility-system} to our model.

\begin{corollary}
    The PoA of the load balancing \limtmpmodel{} is at most~2.
\end{corollary}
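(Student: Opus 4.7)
The plan is to obtain the corollary as a direct application of Vetta's price-of-anarchy theorem for valid utility systems whose social welfare function is non-decreasing and submodular. Both hypotheses have just been established: the preceding theorem shows that the load balancing \limtmpmodel{} is a valid US, and the preceding lemma shows that $\gamma$ is non-decreasing and submodular. Vetta's theorem then yields a PoA bound of $2$ in the utility-system sense, i.e., measured against the worst pure Nash equilibrium of the US.

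The only thing left to verify is that the equilibrium notion in the US coincides with the SPE notion used to define the PoA of the \limtmpmodel{}. In the constructed US, facility player $p$'s utility $\alpha_p(A)$ equals her load $\ell_p(\mathbf{s},\sigma)$, and by \Cref{theo:existence} this value is the same in every client equilibrium $\sigma$. Hence $\alpha_p$ is a well-defined function of the facility placement profile $\mathbf{s}$. A pure NE of the US is therefore exactly a profile in which no facility player can improve her load, which is condition~(1) of the SPE definition, while condition~(2) is already encoded in the definition of $\alpha_p$. Moreover, the welfare $W(\mathbf{s})$ coincides with $\gamma$ evaluated at the corresponding action sets, so Vetta's bound on the worst pure NE translates one-for-one to the bound on the worst SPE.

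With this correspondence in hand the proof finishes by citing Vetta's theorem and reading off the bound. I do not expect any serious obstacle; the main subtlety is conceptual rather than technical, namely that the pure NE of the utility system must be identified with the SPE of the two-stage game. As indicated above, this identification reduces entirely to the uniqueness of induced facility loads guaranteed by \Cref{theo:existence}, so no additional calculation is needed.
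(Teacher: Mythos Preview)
Your proposal is correct and follows exactly the paper's approach: having established that the load balancing \limtmpmodel{} is a valid utility system with non-decreasing submodular $\gamma$, the corollary is obtained by directly invoking Vetta's PoA bound. Your additional remarks on why pure Nash equilibria of the utility system coincide with SPEs of the two-stage game (via the load-uniqueness of \Cref{theo:existence}) make explicit a point the paper leaves implicit, but the underlying argument is the same.
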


\section{Arbitrary Client Behavior}

In the following, we investigate the quality of stable states of the \limtmpmodel{} with arbitrary client behavior, i.e., the client costs are arbitrarily defined, and provide an upper and lower bound for the PoA as well as a lower bound for the PoS. Additionally, we show that computing the social optimum is NP-hard.

\begin{theorem}
    \label{thm:lar-poa-2}
    The PoA of the \limtmpmodel{} is at most $2$.
\end{theorem}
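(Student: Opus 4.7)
The plan is to observe that the utility system arguments from the previous section carry over to arbitrary feasible client behavior, since both the utility system condition and the validity condition used only the feasibility of the client weight distribution function, not the specific load balancing behavior. Once this is re-established, the bound of $2$ follows directly from Vetta's PoA bound for valid utility systems with non-decreasing submodular welfare.

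Concretely, I would first re-verify that the \limtmpmodel{} with an arbitrary feasible $\sigma$ is a utility system. Removing facility agent $f_p$ from a placement profile $\mathbf{s}$ decreases $W(\mathbf{s})$ by exactly the total weight of clients that had $f_p$ as their only facility in shopping range. By feasibility, each such client must distribute her entire weight to $f_p$, so $\ell_p(\mathbf{s},\sigma)$ already accounts for at least this drop, i.e.\ $\alpha_p(A)\ge \gamma(A\oplus\emptyset)$. Next, I would re-verify validity: for any feasible $\sigma$, every covered client distributes exactly her weight $w(v_i)$ among the facilities, while uncovered clients contribute nothing, so
\[
\sum_{p}\ell_p(\mathbf{s},\sigma) \;=\; \sum_{v_i\in C(\mathbf{s})} w(v_i) \;=\; W(\mathbf{s}),
\]
which gives the validity condition (with equality). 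Since $W$ only depends on the set $C(\mathbf{s})$ of covered clients, it is non-decreasing and submodular as already noted.

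Finally, I would invoke the PoA bound of $2$ of \citet{vetta-utility-system} for valid utility systems with non-decreasing submodular welfare. The small subtlety to address is that the \limtmpmodel{} is a two-stage game, whereas Vetta's bound is stated for single-stage games: this is resolved by observing that $W(\mathbf{s})$ depends only on the facility placement, so any stable state induces a pure Nash equilibrium of the facility subgame for the given feasible client response $\sigma$, which is exactly the setting in which Vetta's bound applies.

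The main obstacle I expect is not technical but notational, namely making precise that ``stable'' in the arbitrary-client-behavior setting still refers to a facility-side Nash equilibrium under a fixed feasible response of the clients, so that the classical utility-system PoA result transfers without change.
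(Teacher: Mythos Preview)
Your approach is correct but differs from the paper's. The paper proves this theorem by a direct, self-contained pigeonhole argument: assuming $W(\text{OPT})>2\,W(\text{SPE})$, the uncovered-in-SPE clients carry more than $W(\text{SPE})$ weight, so some facility in OPT attracts more than $W(\text{SPE})/k$ of that weight; a minimum-load facility in the SPE can move to that OPT position and, by feasibility, collect all of that weight (those clients have no other facility in range), contradicting stability. Your route instead lifts the utility-system argument from the load-balancing section to arbitrary feasible $\sigma$, noting that the marginal-contribution inequality and the validity identity $\sum_j \ell_j=W(\mathbf{s})$ use only feasibility, and then invokes Vetta's bound.

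Both arguments rest on the same core fact---a facility's load is at least its marginal contribution to coverage---so the paper's proof is essentially Vetta's inequality unwound for this special case. What your approach buys is a clean conceptual placement of the result and an explanation of \emph{why} the bound is indifferent to the clients' cost functions. What the paper's approach buys is self-containment and the avoidance of a modeling subtlety you correctly flag: in the general \limtmpmodel{} the facility utility $\ell_j(\mathbf{s},\sigma)$ is not a function of the action profile alone, since different client equilibria may induce different loads. You handle this by freezing the SPE response $\sigma$ and treating the first stage as a one-shot game with utilities $\alpha_j(\cdot)=\ell_j(\cdot,\sigma)$; this is legitimate because SPE condition~(2) guarantees $\sigma(\mathbf{s}',\cdot)$ is feasible for every $\mathbf{s}'$, so the utility-system and validity conditions hold at every profile, which is what Vetta's proof actually needs. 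The paper's direct argument sidesteps this bookkeeping entirely.
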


\begin{proof}
    Fix a \limtmpmodel{} with $k$ facility agents. Let OPT be a facility placement profile that maximizes social welfare and let $(\text{SPE},\sigma_{\text{SPE}})$ be a SPE.
    Let $C(\text{SPE})$ be the set of clients $v_i$ which are covered in SPE and $C(\text{OPT})$ be the set of clients~$v_i$ which are covered in OPT, respectively. Let UNCOV $= C(\text{OPT}) \setminus C(\text{SPE})$ be the set of clients which are covered in OPT but uncovered in SPE.

    Assume that $W(\text{OPT}) > 2 W(\text{SPE})$ and hence, $\sum_{v \in \text{UNCOV}}w(v) > W(\text{SPE})$.
    Then, there exists a facility agent $f_p$ that receives in OPT more than $\frac{W(\text{SPE})}{k}$ load from the clients in UNCOV.
    Now consider a facility agent $f_q$ with load $\ell_q \left( \text{SPE},\sigma_{\text{SPE}} \right) \leq \frac{W(\text{SPE})}{k} $. By changing her strategy and selecting the position of facility agent $f_p$ in OPT, agent $f_q$ receives the weight of all clients in UNCOV which are covered by $f_p$ in OPT since they are currently uncovered and, therefore, obtains more than $\frac{W(\text{SPE})}{k}$ load. As this contradicts the assumption of $\left( \text{SPE},\sigma_{\text{SPE}} \right)$ being a SPE, we have that $W(\text{OPT}) \le 2 W(\text{SPE})$.
\end{proof}
\noindent We contrast the upper bound of the PoA with a lower bound for the PoA and PoS.

\begin{theorem}
    The PoA and PoS of the \limtmpmodel{} is at least $2-\frac1k$.
\end{theorem}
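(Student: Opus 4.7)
The plan is to exhibit a parameterized family of instances of the \limtmpmodel{} whose OPT-to-SPE ratio either achieves (for PoA) or approaches (for PoS) $2-\tfrac{1}{k}$. The host graph has $2k-1$ vertices: a central vertex $a$ of weight $W_a$; feeder vertices $b_1,\ldots,b_{k-1}$ each of weight $1$ with a single directed edge $b_i\to a$; and isolated vertices $d_1,\ldots,d_{k-1}$ each of weight $W_d$ with no outgoing edges. The social optimum places one facility at $a$ (which covers the $k$ clients in $A_\mathbf{s}(a) = \{a, b_1,\ldots,b_{k-1}\}$ of total weight $W_a+k-1$) and one facility at each $d_j$ (each covering weight $W_d$), giving $W(\mathrm{OPT})=W_a+(k-1)(1+W_d)$.

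Next, I would show that placing all $k$ facilities at $a$ is an SPE with welfare $W_a+k-1$, in which each facility has load $(W_a+k-1)/k$. For stability, I verify the two nontrivial deviations. A move to an isolated $d_j$ yields load $W_d$ and is not improving when $W_a+k-1\geq kW_d$. A move to a feeder $b_i$ puts the deviating facility in the shopping range of client $b_i$, who can still reach the remaining $k-1$ $a$-facilities; computing her client equilibrium directly (she minimises her max load by concentrating all her weight on the new $b_i$-facility rather than sharing), the deviator ends up with load exactly $1$, and since $(W_a+k-1)/k\geq 1$ for $W_a\geq 1$ this move is also not improving.

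For the PoA bound I take the unit-weight case $W_a=W_d=1$. Then the all-at-$a$ state is an SPE with welfare $k$, and the ratio $W(\mathrm{OPT})/W(\mathrm{SPE})=(2k-1)/k=2-\tfrac{1}{k}$ exactly, yielding $\mathrm{PoA}\geq 2-\tfrac{1}{k}$. For the PoS bound, the same unit-weight instance also admits a second SPE (one facility at some $d_j$, the rest at $a$, welfare $k+1$), so I instead enforce the strict inequality $W_a+k-1>kW_d$; under this condition, any facility placed at some $d_j$ strictly improves its load by co-locating at $a$, ruling out all non-all-at-$a$ configurations and making the all-at-$a$ state the unique SPE. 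Choosing the integer parameterisation $W_a=kN+2-k$ and $W_d=N$, the ratio becomes $((2k-1)N+1)/(kN+1)$, which tends to $2-\tfrac{1}{k}$ as $N\to\infty$, establishing $\mathrm{PoS}\geq 2-\tfrac{1}{k}$ in the supremum sense.

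The main obstacle is the deviation analysis at a feeder vertex $b_i$: because the client at $b_i$ lies in the shopping range of both the deviating facility and the remaining cluster at $a$, one must argue carefully (using the load-balancing behaviour and minimisation of her max load) that she concentrates all her weight on the new $b_i$-facility; it is precisely this concentration that prevents the deviator from acquiring extra load from the cluster.
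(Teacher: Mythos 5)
Your construction is correct, but it is genuinely different from the paper's. The paper uses a single unit-weight family: one large star with $kx$ leaves and $k-1$ small stars with $x-1$ leaves attached to leaves of the large one; uniqueness of the all-at-the-big-center SPE then follows from a pure counting argument (every other vertex attracts at most $x-1 < x+\frac{1}{k}$ weight), and both the PoA and PoS bounds are obtained in the limit $x\to\infty$ from the same instance. You instead use weighted vertices -- a heavy hub $a$, unit feeders $b_i$, and isolated heavy vertices $d_j$ -- and drive the ratio with the weight parameter rather than the degree parameter. This buys you something the paper does not get: for $W_a=W_d=1$ the PoA bound $2-\frac{1}{k}$ is attained \emph{exactly} by a finite instance rather than only in the limit. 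The price is that your stability and uniqueness analysis leans on the load-balancing client equilibrium (the argument that client $b_i$ concentrates her weight on the deviator, and the equal-split load $\frac{W_a+k-1}{k}$ at $a$), whereas the theorem sits in the paper's arbitrary-client-behavior section and the paper's counting argument is meant to be agnostic to $\sigma$; since a lower bound for the load-balancing instance is still a lower bound for the class, this is acceptable but worth flagging. Two small loose ends: your uniqueness claim for the PoS family only rules out facilities on the $d_j$'s, but a non-all-at-$a$ profile could also park a facility on some $b_i$; this is dispatched in one line, since such a facility has load at most $1\le W_d$ and the same move to $a$ (where it earns at least $\frac{W_a+m}{m+1}>W_d$ when joining $m$ others, because every unoccupied $b_i$ must send her weight to the $a$-cluster) is strictly improving. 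Also, when some facilities sit on feeders the $a$-cluster receives only $W_a$ plus the weight of the \emph{unoccupied} feeders, so the bound $\frac{W_a+k-1}{k}$ should be replaced by the slightly weaker but still sufficient estimate above.
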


\begin{proof}
    We prove the statement by providing an example of an instance~\emph{I} which has a unique equilibrium.
    Let~$x \geq 4$, $x \in \mathbb{N}$.
    We construct a \limtmpmodel{} with $k$ facility agents, a host graph $H(V,E,w)$ with $V = \{v_1, \ldots , v_k,$ $v_{1,1}, \ldots, v_{1,x-1},$ $ v_{2,1},$ $\ldots, v_{k-1,x-1}, v_{k,1}, \ldots, v_{k,k x}\}$, for all $v \in V$, $w(v) = 1$ and $E = \{(v_i, v_{i,j}) \mid i \in [1,k-1], j \in [1,x-1] \}\ \cup \{(v_k, v_{k,i}) \mid i \in [1,k x]\}\ \cup \{(v_{k,i}, v_{i,1}) \mid i \in [1,k-1]\}$.
    See \Cref{fig:poa}.
    \begin{figure}
        \centering
        \begin{tikzpicture}[scale= 0.8, transform shape]
            \begin{scope}
                [
                every node/.style = {circle, thick, draw, inner sep = 0 pt, minimum size = 12 pt}
                ]
                \node (c) [label=10:$v_{k}$]at (0, -0.25*\trlength) {};
                \node (1) [label={[label distance=-2]above:$v_{k,1}$}] at (-4*\trlength, -\trlength) {};
                \node (3) [label=left:$v_{k, k-1}$] at (0, -\trlength) {};
                \node (4) [label=left:$v_{k, k}$] at (1*\trlength, -\trlength) {};
                \node (6) [label={[label distance=-2]above:$v_{k, kx}$}] at (2.5*\trlength, -\trlength) {};

                \node (1a) [label=right:$v_{1,1}$]at (-4 *\trlength, -1.8*\trlength) {};
                \node (1b) [label=right:$v_1$]at (-4 *\trlength, -2.55*\trlength) {};
                \node (1c) [label={left:$v_{1,2}$}]at (-4 *\trlength-0.6*\trheight, -2.6*\trlength -0.6*\trheight) {};
                \node (1e) [label={right:$v_{1,x-1}$}]at (-4 *\trlength+0.6*\trheight, -2.6*\trlength -0.6*\trheight) {};

                \node (3a) [label=right:$v_{k-1,1}$]at (0, -1.8*\trlength) {};
                \node (3b) [label=right:$v_{k-1}$]at (0, -2.55*\trlength) {};
                \node (3c) [label={left:$v_{k-1,2}$}]at (0-0.6*\trheight, -2.6*\trlength -0.6*\trheight) {};
                \node (3e) [label={right:$v_{k-1,x-1}$}]at (0+0.6*\trheight, -2.6*\trlength -0.6*\trheight) {};

            \end{scope}
            \coordinate(box1a) at (-4.4*\trlength,0.15*\trlength);
            \coordinate(box1b) at (2.9*\trlength,-1.3*\trlength);
            \coordinate(box2a) at (-4.8*\trlength-0.6*\trheight,-1.5*\trlength);
            \coordinate(box2b) at (-2.9*\trlength+0.6*\trheight,-3.4*\trlength);
            \coordinate(box3a) at (-1.1*\trlength-0.6*\trheight,-1.5*\trlength);
            \coordinate(box3b) at (1.4*\trlength+0.6*\trheight,-3.4*\trlength);
            \begin{scope}
                [
                every node/.style = {inner sep = 0 pt, minimum size = 5 pt}
                ]
                \node (2) at (-2*\trlength, -\trlength) {\dots};
                \node (5) at (1.75*\trlength, -\trlength) {\dots};

                \node (1d) at (-4*\trlength, -2.6*\trlength -0.6*\trheight) {\dots};
                \node (3d) at (0, -2.6*\trlength -0.6*\trheight) {\dots};

                \node at (-2*\trlength, -1.9*\trlength) {\dots};
                \node at (-2*\trlength, -2.55*\trlength) {\dots};
                \node at (-2*\trlength, -2.6*\trlength -0.6*\trheight) {\dots};

                \node (b1) [label=below right:$S_k$] at (box1a) {};
                \node (b2) [label=below right:$S_1$] at (box2a) {};
                \node (b3) [label=below right:$S_{k-1}$] at (box3a) {};

            \end{scope}
            \begin{scope}
                [
                every path/.style = {thick, {Latex[length=2mm]}-}
                ]
                \draw
                (c) edge (1) edge (3) edge (4)  edge (6)
                (1) edge (1a)
                (1b) edge (1a) edge (1c)  edge (1e)
                (3) edge (3a)
                (3b) edge (3a) edge (3c)  edge (3e)
                ;
            \end{scope}
            \begin{scope}
                [
                every path/.style = {thick, dotted}
                ]
                \draw (box1a) rectangle (box1b);
                \draw (box2a) rectangle (box2b);
                \draw (box3a) rectangle (box3b);
            \end{scope}
        \end{tikzpicture}
        \caption{The host graph $H$ of an instance $I$ of the \limtmpmodel{} with arbitrary client behavior with a unique SPE.}
        \label{fig:poa}
    \end{figure}
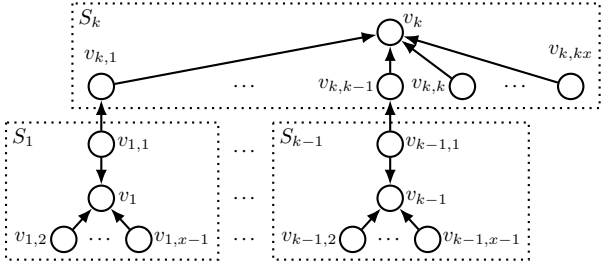

    We note that $H$ consists of a large star $S_k$ with central vertex $v_k$, leaf vertices $(v_{k,1}, \ldots, v_{k,k x})$ and $k-1$ small stars $S_i$ for $i \in [1,k-1]$ with central vertices $v_i$ and leaf vertices $(v_{i,1}, \ldots, v_{i,x-1})$. Each star $S_i$ is connected to $S_k$ via an edge between a leaf vertex of $S_k$ and $S_i$, i.e., $(v_{k,i}, v_{i,1})$.

    If the $k$ facility agents are placed on $\mathbf{s}_{\text{OPT}} = (v_1, \ldots, v_k)$, all clients are covered by exactly one facility. Hence, $W(\text{OPT}(H,k)) = |V| = kx + k +(k-1)(x-1)$.

    In any equilibrium, a facility $f_j$ for $j \in [1,k]$ must receive a load of at least $\frac{kx+1}{k} = x +\frac1k$ as otherwise switching to vertex $v_k$ with $kx+1$ adjacent vertices yields an improvement.
    However, any other vertex in $H$ has at most $x-1$ adjacent vertices, hence, every facility agent gets a load of at most~$x$. Therefore, the unique SPE is $\mathbf{s}_{\text{SPE}} = (v_k, \ldots, v_k)$ with $W(\s_\text{SPE}) = k x+1$ and
    $$\text{PoA} = \text{PoS} = \frac{kx + k +(k-1)(x-1)}{k x+1} = \frac{(2k-1)x+1}{kx + 1}.$$
    We get
    $\lim\limits_{x \rightarrow \infty} \left( \frac{(2k-1)x+1}{kx + 1}\right) = \frac{2k-1}{k} = 2 - \frac1k.$
\end{proof}

\noindent By a reduction from \textsc{3SAT}, we show that computing OPT$(H,k)$ is an NP-hard problem.

\begin{theorem}
    \label{thm:nphard}
    Given a host graph $H$ and a number of $k$ facilities, computing the facility placement maximizing the weighted participation rate OPT$(H,k)$, is NP-hard.
\end{theorem}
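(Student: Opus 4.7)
The plan is to reduce from \textsc{3SAT}. Given a formula $\varphi$ with variables $x_1,\dots,x_n$ and clauses $C_1,\dots,C_m$, I would construct in polynomial time a host graph $H$ and set $k := n$, so that $W(\text{\emph{OPT}}(H,n))$ reaches a specific threshold iff $\varphi$ is satisfiable. For each variable $x_i$, $H$ would contain a \emph{variable gadget} consisting of two literal vertices $T_i, F_i$ of weight $1$ joined by both directed edges $(T_i,F_i)$ and $(F_i,T_i)$, together with an \emph{anchor vertex} $u_i$ of large weight $M := m+1$ carrying out-edges $(u_i,T_i)$ and $(u_i,F_i)$. For each clause $C_j$, $H$ would contain a \emph{clause vertex} $c_j$ of weight $1$ with an out-edge to $T_i$ for every positive literal $x_i \in C_j$ and an out-edge to $F_i$ for every negative literal $\neg x_i \in C_j$.

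By the edge-direction conventions of the model, a single facility at $T_i$ attracts $T_i$, $F_i$ (via $(F_i,T_i)$) and $u_i$ (via $(u_i,T_i)$), plus exactly those clause vertices whose literal points to $T_i$, i.e., the clauses satisfied when $x_i$ is set to true; the situation is symmetric for $F_i$. The main step of the proof is to argue that every optimal placement puts each of its $n$ facilities at a literal vertex of a distinct variable gadget. A facility located at some $u_i$ or $c_j$ covers only that single vertex (no vertex has out-edges into anchors or into clause vertices), so a straightforward exchange argument shows that OPT places all facilities at literal vertices. If some gadget $i$ were then empty, the anchor $u_i$ would be uncovered and at least $M = m+1$ units of welfare would be lost; the freed-up facility placed anywhere else can recover at most the total clause weight $m < M$, contradicting optimality. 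Hence each of the $n$ gadgets receives exactly one facility.

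Under this canonical form, facility placements correspond bijectively to truth assignments via $x_i = \text{true} \Leftrightarrow s_i = T_i$, and the set of covered clause vertices coincides with the set of satisfied clauses. Therefore $W(\text{\emph{OPT}}(H,n)) = n(M+2) + s(\varphi)$, where $s(\varphi)$ denotes the maximum number of clauses simultaneously satisfiable by any assignment. In particular, $\varphi$ is satisfiable iff $W(\text{\emph{OPT}}(H,n)) = n(M+2) + m$, so any polynomial-time algorithm for $\text{\emph{OPT}}(H,k)$ would solve \textsc{3SAT}. The main obstacle is the case analysis in the second paragraph that rules out all deviations from the canonical structure (for instance, co-located facilities, empty-gadget-plus-doubled-gadget configurations, or facilities stranded on an anchor or clause vertex); the weight gap $M \ge m+1$ between anchors and clauses is chosen precisely so that any such deviation loses more weight on some anchor than it can ever recover on the remaining clause vertices.
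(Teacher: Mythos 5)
Your reduction is correct, and its core is the same as the paper's: both reduce from \textsc{3SAT} with $k$ equal to the number of variables, a pair of mutually linked literal vertices per variable, and a clause vertex with out-edges to its literals. Where you differ is in how the canonical structure (one facility per variable gadget, placed on a literal vertex) is enforced. You add a heavy anchor $u_i$ of weight $M=m+1$ per gadget and run a weight-gap exchange argument; this works, and as a bonus it yields the quantitative identity $W(\text{\emph{OPT}}(H,n)) = n(M+2)+s(\varphi)$, i.e., a reduction from \textsc{Max-3SAT} rather than just satisfiability. The paper gets the same forcing for free from a pigeonhole argument: each of the $|X|$ disjoint literal pairs $\{v_x,v_{\neg x}\}$ can only be covered by a facility inside that pair, so with exactly $k=|X|$ facilities, full coverage already pins one facility to each pair and leaves none for clause vertices; hence $\varphi$ is satisfiable iff all clients can be covered, with no anchors and no exchange argument needed. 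So your proof is valid but carries an extra gadget layer and a case analysis (co-location, stranded facilities, empty gadgets) that the paper's counting argument sidesteps; conversely, your version is the one to keep in mind if one later wants hardness of approximation via \textsc{Max-3SAT}, which the paper's all-or-nothing reduction does not directly give.
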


\begin{proof}
    We prove the theorem by giving a polynomial time reduction from the NP-hard \textsc{3SAT} problem.

    For a \textsc{3SAT} instance $\phi$ with a set of clauses $C$ and a set of variables $X$,
    we create a \limtmpmodel{} instance
    with $k = |X|$ facility agents where the host graph $H(V_X \cup V_C,E_X \cup E_C,w)$ is defined as follows:
    \begin{align*}
        w(v)&=\{1 \mid v \in V_X \cup V_C\} \\
        V_X&=\{v_x, v_{\neg x} \mid x \in X\}\\
        V_C&=\{v_c \mid c \in C\}\}\\
        E_X&=\{(v_x, v_{\neg x}), (v_{\neg x}, v_x) \mid x \in X\}\\
        E_C&=\{(v_c, v_l) \mid c \in C, \text{ literal } l \in c\},
    \end{align*}
    where $v_l=v_x$ if the contained variable $x$ is used as a true literal in~$c$, and $v_l=v_{\neg x}$, otherwise. See \Cref{fig:3sat} for an example.
    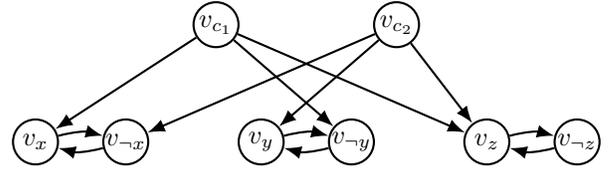
\begin{figure}
        \centering
        \begin{tikzpicture}
            \begin{scope}
                [
                every node/.style = {circle, thick, draw, inner sep = 0 pt, minimum size = 17 pt}
                ]
                \node (c) at (-\trlength, 1.3*\trlength) {$v_{c_1}$};
                \node (d) at (\trlength, 1.3*\trlength) {$v_{c_2}$};

                \node (x1) at (-3*\trlength, 0) {$v_x$};
                \node (x2) at (-2*\trlength, 0) {$v_{\neg x}$};
                \node (y1) at (-0.5*\trlength, 0) {$v_y$};
                \node (y2) at (0.5*\trlength, 0) {$v_{\neg y}$};
                \node (z1) at (2*\trlength, 0) {$v_z$};
                \node (z2) at (3*\trlength, 0) {$v_{\neg z}$};
            \end{scope}
            \begin{scope}
                [
                every path/.style = {thick, -Latex}
                ]
                \draw
                (x1) edge[bend left=15] (x2)
                (x2) edge[bend left=15] (x1)
                (y1) edge[bend left=15] (y2)
                (y2) edge[bend left=15] (y1)
                (z1) edge[bend left=15] (z2)
                (z2) edge[bend left=15] (z1)

                (c) edge (x1) edge (y2) edge (z1)
                (d) edge (x2) edge (y1) edge (z1)
                ;
            \end{scope}
        \end{tikzpicture}
        \caption{An example of a corresponding host graph $H$ to the \textsc{3SAT} instance ${(x\vee \neg y \vee z)} \wedge {(\neg x\vee y \vee z)}$.}
        \label{fig:3sat}
    \end{figure}

    Let $\phi$ be satisfiable and $\alpha$ be an assignment of the variables satisfying $\phi$. We set $\mathbf{s} = (s_1,\ldots, s_k)$ such that for $i \in [1,k]$, $x_i \in X$, $s_i = v_{x_i}$ if $x_i$ is true in $\alpha$ and $s_i = v_{\neg x_i}$ otherwise. By $E_X$, $ v_{x_i}$ and $v_{\neg x_i}$ are covered by a facility agent either located on $ v_{x_i}$ or $v_{\neg x_i}$. To show that each client $v_c \in V_C$ is covered as well, consider the corresponding clause $c = l_1 \vee l_2 \vee l_3$. Since $\phi$ is satisfied, at least one of the literals is true, which means that at least one of $v_{l_1}$, $v_{l_2}$ and $v_{l_3}$ must be occupied by a facility in $\mathbf{s}$. Thus, if $\phi$ is satisfied, we get a placement where all clients are covered, which is optimal.

    Let $\mathbf{s}$ be a facility placement profile where all clients are covered. Note that this implies that for each $x \in X$ either $ v_{x}$ or $v_{\neg x}$ is occupied by a facility agent. Hence, all facilities are placed on vertices in $V_X$. We construct an assignment of the variables $\alpha$ as follows: $x = $ true, if $v_{x} \in \mathbf{s}$ and $x = $ false, if $v_{\neg x} \in \mathbf{s}$. Let $c \in C$ be an arbitrary clause in $\phi$. The corresponding vertices $v_c$ is covered by a facility agent which is placed on an adjacent vertex, $v_{l_1}$, $v_{l_2}$, or $v_{l_3}$. This implies that at least one of the literals $l_1$, $l_2$, and $l_3$ is true in $\alpha$ and therefore $c$ is satisfied. Hence, $\phi$ is satisfiable.
\end{proof}

\section{Conclusion and Future Work}
We provide a general model for non-cooperative facility location with both strategic facilities and clients. Our load balancing \limtmpmodel{} is a proof-of-concept that even in this more intricate setting it is possible to efficiently compute and check client equilibria. Also, in contrast to classical one-sided models and in contrast to Kohlberg's two-sided model, the load-balancing \limtmpmodel{} has the favorable property that stable states always exist and that they can be found via improving response dynamics. Moreover, our bounds on the PoA and the PoS show that the broad class of 2-FLGs is very well-behaved since the societal impact of selfishness is limited.

The load balancing \limtmpmodel{} is only one possible realistic instance of a competitive facility location model with strategic clients; other objective functions are conceivable, e.g., depending on the distance and the load of all facilities in their shopping range. Also, besides the weighted participation rate other natural choices for the social welfare function are possible, e.g., the \emph{total facility variety} of the clients, i.e., for each client, we count the facilities in her shopping range. This measures how many shopping options the clients have. Moreover, we are not aware that the total facility variety has been considered for any other competitive facility location model.

\printbibliography

\end{document}